\let\csname equation*\endcsname\relax
\let\csname endequation*\endcsname\relax 
   \definecolor{BLACK}{gray}{0}
   \definecolor{WHITE}{gray}{1}
   \definecolor{RED}{rgb}{1,0,0}
   \definecolor{GREEN}{rgb}{0,1,0}
   \definecolor{BLUE}{rgb}{0,0,1}
   \definecolor{CYAN}{cmyk}{1,0,0,0}
   \definecolor{MAGENTA}{cmyk}{0,1,0,0}
   \definecolor{YELLOW}{cmyk}{0,0,1,0}
  \theoremstyle{plain}
  \newtheorem{lem}{\protect\lemmaname}
\theoremstyle{plain}
\newtheorem{thm}{\protect\theoremname}
  \theoremstyle{plain}
  \newtheorem{cor}{\protect\corollaryname}
\newtheorem{prop}{\protect\propositionname}
\theoremstyle{plain}
  \providecommand{\lemmaname}{Lemma}
\providecommand{\corollaryname}{Corollary}
\providecommand{\theoremname}{Theorem}
\providecommand{\propositionname}{Proposition}
\begin{document}



\title[Multidimensional $q$-Cramér-Rao inequalities]{On multidimensional generalized Cramér-Rao inequalities, uncertainty
relations and characterizations of generalized $q$-Gaussian distributions
\footnote{This is a preprint version that differs from the published version, J. Phys. A: Math. Theor. 46, 095303, 2013 doi:10.1088/1751-8113/46/9/095303, in minor revisions, pagination and typographics details.} 
}

\author{J.-F. Bercher}

\ead{jean-francois.bercher@univ-paris-est.fr}


\selectlanguage{english}%

\address{Laboratoire d'informatique Gaspard Monge, UMR 8049  ESIEE-Paris,
Université Paris-Est}

\date{\today}
\begin{abstract}

%

In the present work, we show how the generalized Cram\'er-Rao inequality
for the estimation of a parameter, presented in a recent paper, can
be extended to the mutidimensional case with general norms on $\mathbb{R}^{n}$,
and to a wider context. As a particular case, we obtain a new multidimensional Cramér-Rao
inequality which is saturated by generalized $q$-Gaussian distributions.
We also give another related Cram\'er-Rao inequality, for a general norm,
which is saturated  as well by these distributions. Finally, we derive
uncertainty relations from these Cramér-Rao inequalities. These uncertainty
relations involve moments computed with respect to escort distributions,
and we show that some of these relations are saturated by generalized
$q$-Gaussian distributions. These results introduce extended versions
of Fisher information, new Cram\'er-Rao inequalities, and new characterizations of generalized $q$-Gaussian
distributions which are important in several areas of physics and
mathematics.

\end{abstract}

\pacs{{02.50.-r}, {05.90.+m}, {89.70.-a}}

\ams{28D20, 94A17, 62B10, 39B62}


\maketitle

\section{Introduction}

It is well known that the Gaussian distribution is a central distribution
with respect to classical information measures and inequalities. In
particular, the Gaussian distribution is both a maximum entropy and
a minimum Fisher information distribution over all distributions with
the same variance. We will show that the same kind of result holds
for the family of generalized $q$-Gaussians, for Rényi or Tsallis
entropy and a suitable extension of the Fisher information. 

These generalized $q$-Gaussians appear in statistical physics, where
they are the maximum entropy distributions of the nonextensive thermostatistics
\cite{tsallis_introduction_2009}. The generalized $q$-Gaussian distributions
define a versatile family that can describe problems with compact
support as well as problems with heavy tailed distributions. They
are also analytical solutions of actual physical problems, see \cite{lutz_anomalous_2003}, \cite{schwaemmle_q-gaussians_2008}
\cite{vignat_isdetection_2009}, \cite{ohara_information_2010}, and
are sometimes known as Barenblatt-Pattle functions, following their
identification by \cite{barenblatt_unsteady_1952}, \cite{pattle_diffusion_1959}.
We shall also mention that the Generalized $q$-Gaussian distributions
appear in other fields, namely as the solution of non-linear diffusion
equations, or as the distributions that saturate some sharp inequalities
in functional analysis \cite{del_pino_best_2002}, \cite{del_pino_optimal_2003},
\cite{cordero-erausquin_mass-transportation_2004}, \cite{agueh_sharp_2008}.

In the literature, and in particular within nonextensive thermostatistics,
several extensions of the Fisher information and of the Cramér-Rao
inequality have been proposed, e.g. \cite{pennini_semiclassical_2007},
\cite{furuichi_maximum_2009}, \cite{furuichi_generalized_2010}, \cite{naudts_generalised_2008}, \cite{naudts_q-exponential_2009}.
In information theory, the remarkable work by Lutwak et al. \cite{lutwak_Cramer_2005},
\cite{lutwak_extensions_2012} also defines an extended Fisher information
and a Cramér-Rao inequality saturated by $q$-Gaussian distributions.
 However the Fisher information is originally defined in a broader
context as the information about a parameter of a parametric family
of distributions. It is only in the special case of a location parameter
that it reduces to the Fisher information of the distribution. The
Fisher information is especially important for the formulation of
the Cramér-Rao inequality. This well-known inequality appears in the
context of estimation theory, where it defines a lower bound on the
variance of any estimator of a parameter.

In our recent work \cite{bercher_generalized_2012}, we have thrown
a bridge between concepts in estimation theory and tools of nonextensive
thermostatistics. Using the notion of escort distribution, we have
established an extended version of the Cramér-Rao inequality for the
estimation of a parameter. This new Cramér-Rao inequality includes
the standard one, as well as Barankin-Vajda versions \cite[Corollary 5.1]{barankin_locally_1949},\cite{vajda_-divergence_1973}
as particular cases. Furthermore, in the case of a location parameter,
we have obtained extended versions of the standard Cramér-Rao inequality,
which are saturated by the generalized $q$-Gaussians. This means
that among all distributions with a given moment, the generalized
$q$-Gaussians are also the minimizers of extended versions of the
Fisher information, just as the standard Gaussian minimizes Fisher
information over all distributions with a given variance. This result
yields a new information-theoretic characterization of these generalized
Gaussian distributions. 

However, a quite frustrating point is that these results were limited
to the univariate case, while the multidimensional case is obviously
of high importance. This restriction is overcome in the present paper
where we show that previous results can be extended to the multidimensional
case. More than that, we consider an even wider context where moments
of the error are computed with respect to two different probability
distributions. In addition, giving our results for general norms will
be hardly more difficult than for Euclidean norms, so we consider
this general case from the beginning. Finally, we derive new uncertainty relations 
from the multidimensional Cram\'er-Rao inequalities. Let us now give a brief overview
of the results, together with the organization of the paper.

Let $\theta\in\Theta\subseteq\mathbb{R}^{n}$ be a multidimensional
parameter that we wish to estimate using data $x.$ We show that for
$\hat{\theta}(x)$ an estimator of $\theta$, if $f(x;\theta)$ and
$g(x;\theta)$ are two probability densities, and if $\alpha$ and
$\beta$ are Hölder conjugates of each other, then 
\begin{equation}
E\left[\left\Vert \hat{\theta}(x)-\theta\right\Vert ^{\alpha}\right]^{\frac{1}{\alpha}}I_{\beta}[f|g;\theta]^{\frac{1}{\beta}}\geq\left|n+\nabla_{\theta}.\, B_{f}(\theta)\right|,\label{eq:VeryGeneralCR}
\end{equation}
 where $\|.\|$ is a general norm on $\mathbb{R}^{n},$ $\nabla_{\theta}.\, B_{f}(\theta)$
represents the divergence of the bias between $\hat{\theta}(x)$ and
$\theta$, and $I_{\beta}[f|g;\theta]$ stands for a generalized Fisher
information that measures the information in $f$\, about $\theta$,
and is taken with respect to $g$. This general result is established
in section \ref{sec:The-generalized-Cram=0000E9r-Rao}. Then we discuss
in subsection \ref{sub:Main-consequences-of} some special cases of
this general inequality. In particular, if $f$ and $g$ is a pair
of $q$-escort distributions, we obtain 
\begin{alignat}{1}
E\left[\left\Vert \hat{\theta}(x)-\theta\right\Vert ^{\alpha}\right]^{\frac{1}{\alpha}}\, I_{\beta,q}\left[f|g;\theta\right]^{\frac{1}{\beta}} & \geq\left|n+\nabla_{\theta}.E_{q}\left[\hat{\theta}(x)-\theta\right]\right|\label{eq:GenqCR1-1}\\
E_{\bar{q}}\left[\left\Vert \hat{\theta}(x)-\theta\right\Vert ^{\alpha}\right]^{\frac{1}{\alpha}}\, I_{\beta,q}\left[f|g;\theta\right]^{\mbox{\ensuremath{\frac{1}{\beta}}}} & \geq\left|n+\nabla_{\theta}.E\left[\hat{\theta}(x)-\theta\right]\right|,\label{eq:GenqCR2-1}
\end{alignat}
where $I_{\beta,q}\left[f|g;\theta\right]$ is the generalized ($\beta,q$)-Fisher
information, and $E_{q}[.]$ denotes the $q$-expectation which is
used in nonextensive statistics. These results are the mutidimensional
extensions, with an arbitrary norm, of our previous $q$-Cramér-Rao
inequalities \cite{bercher_generalized_2012}. In the monodimensional
case and $q=1,$ these inequalities reduce to the Barankin-Vajda Cramér-Rao
inequality, and to the standard Cramér-Rao inequality for $\alpha=\beta=2.$
In addition, in the case of a location parameter, we show that 
\begin{equation}
E\left[\left\Vert x\right\Vert ^{\alpha}\right]^{\frac{1}{\alpha}}\, I_{\beta,q}\left[g\right]^{\frac{1}{\beta}}\geq n\label{eq:GenqCRqGauss}
\end{equation}
which reduces again to our previous results in the univariate case.
Examining carefully the cases for equality in (\ref{eq:GenqCRqGauss}),
we exhibit that the lower bound is attained by generalized $q$-Gaussian
distributions, and we prove that these generalized Gaussian are the
unique extremal functions, provided that the dual norm is strictly
convex. For a random vector $x$ in $\mathbb{R}^{n}$, these generalized
$q$-Gaussian have the probability density 
\begin{equation}
G_{\gamma}(x)=\begin{cases}
\frac{1}{Z(\gamma)}\left(1-\left(q-1\right)\gamma\|x\|^{\alpha}\right)_{+}^{\frac{1}{q-1}} & \text{for }q\not=1\\
\frac{1}{Z(\gamma)}\exp\left(-\gamma\|x\|^{\alpha}\right) & \text{if }q=1
\end{cases}\text{ }\label{eq:qgauss_general-1}
\end{equation}
for $\alpha\in(0,\infty),$ $\gamma$ a real positive parameter and
$q>(n-\alpha)/n,$ where we use the notation $\left(x\right)_{+}=\mbox{max}\left\{ x,0\right\} $,
and where $Z(\gamma)$ is the partition function %
\footnote{ \label{foot} In the case of the Euclidean norm, the general expressions of
the main information measures attached to the generalized Gaussians
are derived in Appendix A of \cite{bercher__2012}. Similar expressions
can readily be obtained in the case of a general norm, using the change
of variable in polar coordinates $x=r\, u,$ with $u=x/||x||$  and
the representation of the Lebesgue measure $\mathrm{d}x=r^{n-1}\mathrm{d}r\,\mathrm{d}\sigma(u)$,
c.f. \cite[p. 87]{stroock_concise_1998}, where $\mathrm{d}\sigma(u)$
denotes the surface element on the unit sphere. By this remark, the
expressions in \cite[Appendix A]{bercher__2012} are valid, with the
proviso that $\omega_{n}$ will denote the volume of the $n$-dimensional
unit ball $\mathcal{B}=\left\{ x\in\mathbb{R}^{n}:\,\|x\|\leq1\right\} $. %
}. For $q>1$, the density has a compact support, while for $q\leq1$
it is defined on the whole $\mathbb{R}^{n}$ and behaves as a power
distribution for $\|x\|\rightarrow\infty.$ A shorthand notation for
the expression of the generalized $q$-Gaussian density is 
\begin{equation}
G_{\gamma}(x)=\frac{1}{Z(\gamma)}\exp_{q^{*}}\left(-\gamma\|x\|^{\alpha}\right),\label{eq:DefQGaussian}
\end{equation}
with $q^{*}=2-q$, and where the so-called $q$-exponential function
is defined by
\begin{equation}
\exp_{q}(x):=\left(1+(1-q)x\right)_{+}^{\frac{1}{1-q}},\text{ for }q\neq1\text{ and }\exp_{q=1}(x):=\exp(x).\label{eq:defExpq-1}
\end{equation}

In section \ref{sec:Another-Cram=0000E9r-Rao-inequality}, we present
another Cramér-Rao type inequality which is also saturated by the
generalized $q$-Gaussian. This inequality has been originally established
by \cite{lutwak_Cramer_2005}, and extended to the multidimensional
case in \cite{lutwak_extensions_2012} and independently in \cite{bercher__2012}
in the case of an Euclidean norm. We show here that this last inequality
can readily be stated and proved in the case of an arbitrary norm. 

Finally, in section \ref{sec:Uncertainty}, we derive some new multidimensional
uncertainty relations from the generalized $q$-Cramér-Rao inequalities.
These uncertainty relations involve escort mean values and are saturated
by generalized Gaussians. In particular, we obtain an inequality of
the form 
\begin{equation}
\left(E_{\frac{k}{2}}\left[\left\Vert x\right\Vert _{2}^{\gamma}\right]\right)^{\frac{1}{\gamma}}\left(E\left[\left\Vert \xi\right\Vert _{2}^{\theta}\right]\right)^{\mbox{\ensuremath{\frac{1}{\theta\lambda}}}}>\frac{1}{M_{\frac{k}{2}}[|\psi|^{2}]^{\frac{1}{k\lambda}}}\,\left(E_{\frac{k}{2}}\left[\left\Vert x\right\Vert _{2}^{\gamma}\right]\right)^{\frac{1}{\gamma}}\left(E\left[\left\Vert \xi\right\Vert _{2}^{\theta}\right]\right)^{\mbox{\ensuremath{\frac{1}{\theta\lambda}}}}\geq K,\label{eq:uncertainty_intro}
\end{equation}
where $x$ and $\xi$ are two Fourier dual variables, $\gamma\geq2,$
$\theta\geq2$, and $E_{\frac{k}{2}}[.]$ denotes an expectation computed
with respect to an escort distribution of order $\frac{k}{2}.$ The
lower bound $K$ is fixed and attained when the underlying wave function
is a $q$-Gaussian. For $\gamma=\theta=2,$ $q=1,$ (\ref{eq:uncertainty_intro})
gives the multidimensional version of the well-known Weyl-Heisenberg
uncertainty principle.

In order to derive these different inequalities, we will need some
preliminary results, in particular concerning some properties of general
norms on $\mathbb{R}^{n}.$ This is the objective of section \ref{sec:Preliminary-results}
where we first define the notion of dual norms and prove a result
on the gradient of a general norm. Then, we establish, together with
its equality conditions, a general Hölder-type inequality. This inequality
will be an essential ingredient in the derivation of the general Cramér-Rao
inequality (\ref{eq:VeryGeneralCR}).

\section{Preliminary results\label{sec:Preliminary-results}}

As mentioned above, we will consider here general norms on $\mathbb{R}^{n}.$
Let us first simply recall that a norm on $\mathbb{R}^{n}$ is a function
$\|.\|$: $\mathbb{R}^{n}\rightarrow\mathbb{R}_{+}$ such that for
any $x$, $y\in\mathbb{R}^{n}$ and $\gamma\in\mathbb{R}$, then 
\begin{equation}
\mbox{(a) \ensuremath{\|\gamma x\|=|\gamma|\|x\|}, (b) \ensuremath{\|x+y\|\leq\|x\|+\|y\|}, and (c) \ensuremath{\|x\|=0} \, iff\,\  \ensuremath{x=0}. }
\end{equation}
A large class of norms is the class of $L_{p}$-norms, $p\geq1,$
given by $\|x\|_{p}=\left(\sum_{i=1}^{n}|x|_{i}^{p}\right)^{\frac{1}{p}}.$
As important particular cases, we have the $L_{1}$-norm, $\|x\|_{1}=\sum_{i=1}^{n}|x|_{i}$,
the max-norm or $L_{\infty}$-norm $\|x\|_{\infty}=\max\,\left(|x_{1}|,\ldots|x_{n}|\right),$
and of course the Euclidean $L_{2}$-norm $\|x\|_{2}=\left(\sum_{i=1}^{n}x_{i}^{2}\right)^{\frac{1}{2}}.$
We shall mention that it is possible to use weighted versions of
the norms above, e.g. $\|x\|_{w,p}=\left(\sum_{i=1}^{n}w_{i}|x|_{i}^{p}\right)^{\frac{1}{p}},$
with $\mbox{ }w_{i}>0$, and that any injective linear transformation
$A$ leads to a new norm, such as $\|x\|_{A}=\|Ax\|.$ Finally, it
is also possible to construct new norms by combining different norms
defined on subvectors of $x.$ 

A related important notion is the notion of dual-norm. Let $E=(\mathbb{R}^{n},\left\Vert .\right\Vert )$
a $n$-dimensional normed space, where $\left\Vert .\right\Vert $
is an arbitrary norm, and let us denote $E^{*}=(\mathbb{R}^{n},\left\Vert .\right\Vert _{*})$
its dual space. For $Y\in E^{*},$ the dual norm $\left\Vert .\right\Vert _{*}$
is defined by
\begin{equation}
\left\Vert Y\right\Vert _{*}=\underset{\left\Vert X\right\Vert \leq1}{\mathrm{sup}}X.Y,\label{eq:DefinitionDualNorm}
\end{equation}
where $X.Y$ is the standard scalar product $X.Y=\sum_{i=1}^{n}X_{i}Y_{i}.$
In particular, it is well known that if $\left\Vert .\right\Vert $
is a $L_{p}$-norm, then $\left\Vert .\right\Vert _{*}$ is the $L_{q}$-norm,
where $p$ and $q$ are Hölder conjugates of each other, i.e. $p^{-1}+q^{-1}=1$,
see e.g. \cite[chapter 5]{horn_matrix_1990}. By a direct consequence
of the definition of the dual norm, we always have 
\begin{equation}
X.Y\leq\left\Vert X\right\Vert \left\Vert Y\right\Vert _{*}.
\end{equation}
Note that when the dot product $X.Y$ is negative, we can always take
the minus of one of the elements to get $\left|X.Y\right|=X.\left(-Y\right)\leq\|X\|\|-Y\|_{*}=\|X\|\|Y\|_{*}.$
Hence, we see that we actually have an extension of Hölder's inequality
for vectors:
\begin{equation}
\left|X.Y\right|\leq\left\Vert X\right\Vert \left\Vert Y\right\Vert _{*}.\label{eq:HolderGenDualNorm}
\end{equation}
Obviously, we recover here the Cauchy-Schwarz inequality if $\left\Vert .\right\Vert =\left\Vert .\right\Vert _{2}$
and the standard Hölder inequality for vectors if $\left\Vert .\right\Vert =\left\Vert .\right\Vert _{p},$
and thus $\left\Vert .\right\Vert _{*}=\left\Vert .\right\Vert _{q}.$

In the following, we will need several facts on the gradient of a
norm. These facts are stated in the next Lemma. 
\begin{lem}
\label{LemmeGradient}Let $\left\Vert .\right\Vert $ be differentiable
at $x\in E$, and denote $x^{*}=\nabla_{x}\left\Vert .\right\Vert (x)\in E^{*}$
the gradient of the norm at $x$. The gradient of $\left\Vert x\right\Vert $
satisfies (a) $x.x^{*}=\left\Vert x\right\Vert $ and (b) $\left\Vert x^{*}\right\Vert _{*}=1.$
Furthermore, when the dual norm \textup{$\left\Vert .\right\Vert _{*}$}
is strictly convex, then the gradient \textup{$x^{*}$} is the unique
vector that satisfies (a) and (b). \end{lem}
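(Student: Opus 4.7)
The plan is to obtain (a) from the positive homogeneity of the norm, (b) from (a) combined with convexity, and the uniqueness assertion from the strict convexity of the dual norm.

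For part (a), I would exploit the fact that any norm is positively homogeneous of degree one, i.e. $\|tx\|=t\|x\|$ for $t>0$. Differentiating both sides with respect to $t$ and evaluating at $t=1$ (Euler's relation for positively homogeneous differentiable functions) yields $x\cdot x^{*}=\|x\|$, which is exactly (a).

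For part (b), the lower bound $\|x^{*}\|_{*}\geq 1$ follows immediately by plugging the unit vector $x/\|x\|$ into the supremum defining the dual norm in (\ref{eq:DefinitionDualNorm}): thanks to (a) this test yields the value $1$. For the reverse inequality, I would use convexity of $\|.\|$ together with differentiability at $x$, so that the tangent hyperplane lies below the graph, giving $\|y\|\geq\|x\|+(y-x)\cdot x^{*}$ for every $y\in E$. Combining with (a) collapses this to $y\cdot x^{*}\leq\|y\|$ for all $y$, which bounds the supremum defining $\|x^{*}\|_{*}$ by $1$.

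For uniqueness, suppose $z\in E^{*}$ also satisfies $x\cdot z=\|x\|$ and $\|z\|_{*}=1$. Consider the midpoint $w=(x^{*}+z)/2$. Linearity gives $x\cdot w=\|x\|$, so by the generalized H\"older inequality (\ref{eq:HolderGenDualNorm}) we obtain $\|w\|_{*}\geq 1$; the triangle inequality gives the reverse bound $\|w\|_{*}\leq(\|x^{*}\|_{*}+\|z\|_{*})/2=1$. Hence $\|(x^{*}+z)/2\|_{*}=1=\|x^{*}\|_{*}=\|z\|_{*}$, and the strict convexity of $\|.\|_{*}$ (in the geometric form: equality in the triangle inequality for distinct unit vectors is forbidden) forces $z=x^{*}$.

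The main obstacle will be to package the strict-convexity step cleanly, since one must ensure the midpoint argument is carried out in the dual space and that the right geometric characterization of strict convexity (equality case of the triangle inequality on the unit sphere) is invoked; the rest of the proof amounts to routine use of homogeneity, convexity, and the definition of dual norm.
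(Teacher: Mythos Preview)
Your proof is correct, and the uniqueness argument is essentially identical to the paper's (they work with $x_1^*+x_2^*$ rather than the midpoint, but the logic is the same). The treatment of (a) and (b) differs slightly from the paper and is worth a brief comparison.

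For (a), the paper does not invoke Euler's relation directly; instead it writes the directional derivative $\lim_{\lambda\to 0}(\|x+\lambda v\|-\|x\|)/\lambda = v\cdot x^*$ and bounds it above by $\|v\|$ via the triangle inequality, then sets $v=x$ to obtain equality. Your homogeneity argument is cleaner for (a) in isolation. The payoff of the paper's route is that the inequality $v\cdot x^*\leq\|v\|$ for all $v$ falls out as a by-product, so (b) follows immediately: the supremum in the dual norm is at most $1$ and is attained at $v=x/\|x\|$. You instead recover this same inequality from the first-order convexity (supporting hyperplane) condition $\|y\|\geq\|x\|+(y-x)\cdot x^*$, which is a perfectly valid alternative---indeed it makes explicit that the relevant structural fact is convexity rather than the triangle inequality per se. Both approaches are standard; the paper's has the minor economy of deriving (a) and the key bound for (b) from a single computation, while yours separates the ingredients (homogeneity for (a), convexity for (b)) more transparently.
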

\begin{proof}
We begin by equality (a). Let $x$ and $v$ two vectors of $E$ and
$\lambda$ a real parameter. By the triangle inequality, we have $\left\Vert x+\lambda v\right\Vert \leq\left\Vert x\right\Vert +\lambda\left\Vert v\right\Vert $,
so that 
\begin{equation}
\lim_{\lambda\rightarrow0}\frac{\left\Vert x+\lambda v\right\Vert -\left\Vert x\right\Vert }{\lambda}\leq\left\Vert v\right\Vert .\label{eq:deriv}
\end{equation}
In the other hand, the chain rule for derivation $\frac{\mathrm{d}\left\Vert Z\right\Vert }{\mathrm{d}\lambda}=\frac{\mathrm{d}Z}{\mathrm{d}\lambda}.\nabla_{Z}\left\Vert Z\right\Vert ,$
with $Z=x+\lambda v$, gives
\begin{equation}
\left.\frac{\mathrm{d}\left\Vert x+\lambda v\right\Vert }{\mathrm{d}\lambda}\right|_{\lambda=0}=v.\nabla_{x}\left\Vert x\right\Vert \leq\left\Vert v\right\Vert ,\label{eq:deriv2}
\end{equation}
where the right inequality follows from (\ref{eq:deriv}). Of course,
taking $v=x$ in (\ref{eq:deriv}) gives the equality sign, and (\ref{eq:deriv2})
becomes $x.\nabla_{x}\left\Vert x\right\Vert =\left\Vert x\right\Vert $,
that is (a). 

By (\ref{eq:deriv2}), we also get that 
\begin{equation}
\nabla_{x}\left\Vert x\right\Vert .\frac{v}{\left\Vert v\right\Vert }\leq1,
\end{equation}
with equality if $v=x.$ Therefore, in the definition of the dual
norm $\left\Vert \nabla_{x}\left\Vert x\right\Vert \right\Vert _{*}=\sup_{\left\Vert w\right\Vert \leq1}w.\nabla_{x}\left\Vert x\right\Vert $,
the supremum is equal to one and is attained for $w=x/\left\Vert x\right\Vert $.
This proves (b). 

In functional analysis, the existence of a solution to a system analogue
to (a),(b) is granted by a consequence of the Hahn-Banach theorem,
see e.g. the background material in \cite{bonnans_perturbation_2000}.
In this context, the uniqueness of extension in the Hahn-Banach theorem,
therefore the uniqueness of $x^{*},$ is guaranteed if the primal
space $E$ is smooth, which in turn is equivalent to the strict convexity
of the dual norm \cite[Chapter 2]{limaye_functional_1996}. In our
setting, it is easy to check that we have uniqueness of the solution
to the system (a),(b) provided that the dual norm is strictly convex.
Indeed, if $x_{1}^{*}$ and $x_{2}^{*}$ are two solutions to (a),(b),
we have by (a) $\frac{x}{\left\Vert x\right\Vert }.\left(x_{1}^{*}+x_{2}^{*}\right)=2$.
Accordingly, the dual norm $\left\Vert x_{1}^{*}+x_{2}^{*}\right\Vert _{*}=\sup_{\left\Vert w\right\Vert \leq1}w.\left(x_{1}^{*}+x_{2}^{*}\right)$
is necessarily greater than 2: $\left\Vert x_{1}^{*}+x_{2}^{*}\right\Vert _{*}\geq2.$
In the other hand, if the dual norm is strictly convex and using (b),
we have $\left\Vert x_{1}^{*}+x_{2}^{*}\right\Vert _{*}\leq\left\Vert x_{1}^{*}\right\Vert _{*}+\left\Vert x_{2}^{*}\right\Vert _{*}=2$,
with equality if and only if $x_{1}^{*}=x_{2}^{*}$. Combining the
two inequalities, we see that the two solutions are necessarily equal.
Finally, since we have already identified that the gradient of the
norm satisfies (a),(b), we get the last item in the Lemma. 
\end{proof}
The standard Hölder's inequality works for functions and relates the
$L_{1}$ norm of the product of two functions to the product of their
$L_{p}$ and $L_{q}$ norms: $\|fg\|_{1}\le\|f\|_{p}\|g\|_{q}$, with
$1\leq p,\, q\leq\infty$ and $1/p+1/q=1$. For vectors and an arbitrary
norm $\|.\|$, the inequality (\ref{eq:HolderGenDualNorm}) gives
another kind of Hölder inequality (actually, this inequality is also
true in a broader context, see e.g. \cite{morrison_functional_2000}).
By combining these two inequalities, we obtain another Hölder-type
inequality for vector-valued functions, which involves arbitrary norms.
This inequality will be a key in the derivation of the new multidimensional
Cramér-Rao inequality. It is given, with its equality condition, in
the following Lemma.
\begin{lem}
\label{LemmaGenHolderIneq}Let $E=(\mathbb{R}^{n},\left\Vert .\right\Vert )$
be a $n$-dimensional normed space and $E^{*}=(\mathbb{R}^{n},\left\Vert .\right\Vert _{*})$
its dual space. If $X(t)$ and $Y(t)$ are two functions taking values
respectively in $E$ and $E^{*}$, and if $w(t)$ is a weight function,
then 
\begin{alignat}{1}
\left(\int\|X(t)\|^{\alpha}\, w(t)\mathrm{d}t\right)^{\frac{1}{\alpha}}\left(\int\|Y(t)\|_{*}^{\beta}\, w(t)\mathrm{d}t\right)^{\frac{1}{\beta}}\geq & \int\left|X(t).Y(t)\right|\, w(t)\mathrm{d}t,\label{eq:GenHolderInequality-a}\\
\geq & \left|\int X(t).Y(t)\, w(t)\mathrm{d}t\right|\label{eq:GenHolderInequality-b}
\end{alignat}
with $\alpha$ and $\beta$ Hölder conjugates of each other, i.e.
$\alpha^{-1}+\beta^{-1}=1,$ $\alpha\geq1$. The equality is obtained
if 
\begin{equation}
Y(t)=K\|X(t)\|^{\alpha-1}\nabla_{X(t)}\|X(t)\|,\label{eq:ExtremalGenHolderInequality}
\end{equation}
with $K\in\mathbb{R}$ for inequality (\ref{eq:GenHolderInequality-a}),
and with $K\in\mathbb{R}_{+}$ for the lower bound (\ref{eq:GenHolderInequality-b}).
If the dual norm is strictly convex, then the function $Y(t)$ in
(\ref{eq:ExtremalGenHolderInequality}) above is the unique function
which saturates the inequalities (\ref{eq:GenHolderInequality-a}-\ref{eq:GenHolderInequality-b}). \end{lem}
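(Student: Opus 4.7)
The plan is to chain together two inequalities: the pointwise vector Hölder inequality (\ref{eq:HolderGenDualNorm}) for the two vector fields $X(t)$ and $Y(t)$, and the classical scalar Hölder inequality in integral form for the two non-negative scalar functions $\|X(t)\|$ and $\|Y(t)\|_*$. First I would write, pointwise in $t$,
\begin{equation*}
|X(t).Y(t)| \leq \|X(t)\|\,\|Y(t)\|_*,
\end{equation*}
which is (\ref{eq:HolderGenDualNorm}). Multiplying by the weight $w(t)\geq 0$ and integrating gives $\int|X(t).Y(t)|w(t)\,\mathrm{d}t \leq \int \|X(t)\|\,\|Y(t)\|_*\,w(t)\,\mathrm{d}t$. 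Then I would apply the ordinary Hölder inequality, on the pair of scalar functions $\|X(t)\|$ and $\|Y(t)\|_*$ against the measure $w(t)\,\mathrm{d}t$ with exponents $\alpha,\beta$ (H\"older conjugates, $\alpha\geq 1$), to bound this by $\left(\int\|X\|^\alpha w\,\mathrm{d}t\right)^{1/\alpha}\left(\int\|Y\|_*^\beta w\,\mathrm{d}t\right)^{1/\beta}$. This gives (\ref{eq:GenHolderInequality-a}). The second bound (\ref{eq:GenHolderInequality-b}) is just the triangle inequality $|\int X.Y\,w\,\mathrm{d}t|\leq\int |X.Y|\,w\,\mathrm{d}t$.

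For the equality case, I would substitute the candidate $Y(t)=K\|X(t)\|^{\alpha-1}\nabla_{X(t)}\|X(t)\|$ and check both inequalities saturate. Using parts (a) and (b) of Lemma~\ref{LemmeGradient}, one gets $X(t).Y(t)=K\|X(t)\|^{\alpha}$ and $\|Y(t)\|_*=|K|\,\|X(t)\|^{\alpha-1}$, so $|X(t).Y(t)|=\|X(t)\|\|Y(t)\|_*$ pointwise, which saturates the vector Hölder step. Moreover $\|Y(t)\|_*^\beta = |K|^\beta\|X(t)\|^{(\alpha-1)\beta}=|K|^\beta\|X(t)\|^{\alpha}$ since $(\alpha-1)\beta=\alpha$, so the scalar integral Hölder step is also saturated (the two integrands are proportional). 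For the stronger bound (\ref{eq:GenHolderInequality-b}) one additionally needs $X(t).Y(t)$ to keep constant sign, which forces $K\in\mathbb{R}_+$; for (\ref{eq:GenHolderInequality-a}) any real $K$ works.

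For the uniqueness claim under strict convexity of the dual norm, I would argue as follows. Suppose some $Y(t)$ saturates both inequalities. Saturation of the scalar integral Hölder forces $\|Y(t)\|_*^\beta$ to be a non-negative constant multiple of $\|X(t)\|^\alpha$, i.e.\ $\|Y(t)\|_*=c\,\|X(t)\|^{\alpha-1}$ for some $c\geq 0$ (almost everywhere on $\{w>0\}$). Saturation of the pointwise vector Hölder forces, at each $t$ with $X(t)\neq 0$, the vector $Y(t)/\|Y(t)\|_*$ to satisfy the conditions (a) and (b) of Lemma~\ref{LemmeGradient} with respect to $X(t)$ (up to a sign). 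By the uniqueness part of that Lemma, valid exactly because the dual norm is strictly convex, one has $Y(t)/\|Y(t)\|_* = \pm\nabla_{X(t)}\|X(t)\|$; the constant sign is then fixed by the requirement in (\ref{eq:GenHolderInequality-b}) that $X(t).Y(t)$ does not change sign. Combining the magnitude and direction gives $Y(t)=K\|X(t)\|^{\alpha-1}\nabla_{X(t)}\|X(t)\|$ with a single real constant $K$, as claimed.

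The main obstacle is the uniqueness argument: one must invoke the scalar Hölder equality case carefully (which only determines $\|Y\|_*$ up to a scalar, not its direction), and then combine it with the pointwise direction information coming from Lemma~\ref{LemmeGradient}, paying attention to the sign in the transition from (\ref{eq:GenHolderInequality-a}) to (\ref{eq:GenHolderInequality-b}). The rest is a routine assembly of two classical Hölder inequalities.
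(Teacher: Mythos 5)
Your proposal is correct and follows essentially the same route as the paper: chaining the pointwise dual-norm inequality (\ref{eq:HolderGenDualNorm}) with the scalar integral H\"older inequality, and then characterizing equality by combining the H\"older equality condition $\|Y(t)\|_{*}=c\|X(t)\|^{\alpha-1}$ with the pointwise condition $|X(t).u|=\|X(t)\|$ and the uniqueness part of Lemma~\ref{LemmeGradient} under strict convexity of the dual norm. The only cosmetic difference is that you separate the sufficiency check (direct substitution of the candidate $Y$) from the necessity argument, whereas the paper derives the form of $Y$ directly from the equality conditions; the substance is the same.
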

\begin{proof}
By inequality (\ref{eq:HolderGenDualNorm}), we have $\left|X(t).Y(t)\right|\leq\|X(t)\|\|Y(t)\|_{*}.$
Integrating this inequality with respect to $t,$ we obtain 
\begin{equation}
\int\left|X(t).Y(t)\right|\, w(t)\mathrm{d}t\leq\int\|X(t)\|\|Y(t)\|_{*}w(t)\mathrm{d}t.\label{eq:ineq1}
\end{equation}
Obviously, we always have 
\begin{equation}
\left|\int X(t).Y(t)\, w(t)\mathrm{d}t\right|\leq\int\left|X(t).Y(t)\right|\, w(t)\mathrm{d}t,\label{eq:simplepositiveinequality}
\end{equation}
with equality if $X(t).Y(t)\geq0$ everywhere.  Then, it only remains
to apply the standard Hölder inequality to right hand side of (\ref{eq:ineq1}):
\begin{equation}
\int\|X(t)\|\|Y(t)\|_{*}w(t)\mathrm{d}t\leq\left(\int\|X(t)\|^{\alpha}\, w(t)\mathrm{d}t\right)^{\frac{1}{\alpha}}\left(\int\|Y(t)\|_{*}^{\beta}\, w(t)\mathrm{d}t\right)^{\frac{1}{\beta}}\label{eq:ineq2}
\end{equation}
to obtain (\ref{eq:GenHolderInequality-a}). The inequality (\ref{eq:GenHolderInequality-b})
then follows by (\ref{eq:simplepositiveinequality}). 

As far as the cases of equality are concerned, we know that in the
Hölder inequality (\ref{eq:ineq2}), the equality is obtained if and
only if $\|Y(t)\|_{*}^{\beta}=K\|X(t)\|^{\alpha}$, with $K$ a positive
constant. Using the fact that $\alpha/\beta=\alpha-1,$ the condition
becomes $\|Y(t)\|_{*}=K\|X(t)\|^{\alpha-1}$. This condition implies
that $Y(t)$ must be of the form 
\begin{equation}
Y(t)=K\|X(t)\|^{\alpha-1}u,\label{eq:ExpressionY}
\end{equation}
where $u$ is a vector of $E^{*}$ with unit norm: $\|u\|_{*}=1.$
By inequality (\ref{eq:HolderGenDualNorm}) we see that the integrand
in the left side of (\ref{eq:ineq1}) is always less or equal the
integrand on the right. Therefore, we will only get equality in (\ref{eq:ineq1})
if the integrands are equal. Then if we plug (\ref{eq:ExpressionY})
in the inequality (\ref{eq:HolderGenDualNorm}), we obtain the condition
$\|X(t)\|^{\alpha-1}\left|X(t).u\right|=\|X(t)\|^{\alpha}$, that
is finally $\left|X(t).u\right|=\|X(t)\|.$ Since we know by Lemma
\ref{LemmeGradient} that $v=\nabla_{X(t)}\|X(t)\|$ is a unit vector
that satisfies $X(t).v=\|X(t)\|$, and is unique if the dual norm
is strictly convex, we see that $u=\pm v=\pm\nabla_{X(t)}\|X(t)\|$
and this concludes the proof of the first inequality. For equality
to hold in the lower bound (\ref{eq:GenHolderInequality-b}), the
integrand must be positive, which in turn implies that $X(t).u=\|X(t)\|$
and $u=\nabla_{X(t)}\|X(t)\|.$
\end{proof}

\section{The generalized Cramér-Rao inequality\label{sec:The-generalized-Cram=0000E9r-Rao}}

In this section, we first derive a main Cramér-Rao inequality for
the estimation of a multidimensional parameter and introduce a generalized
version of Fisher information. Next, we examine the particular case
of a pair of escort distributions, and then the case of a location
parameter. So doing, we obtain multidimensional versions of the $q$-Cramér-Rao
inequality and a Cramér-Rao inequality characterizing generalized
$q$-Gaussian distributions.

\subsection{The main Cramér-Rao inequality for the estimation of a parameter\label{sub:The-main-Cram=0000E9r-Rao}}

The problem of estimation is to determine a function $\hat{\theta}(x)$
in order to estimate an unknown parameter $\theta.$ Let $f(x;\theta)$
and $g(x;\theta)$ be two probability density functions, with $x\in X\subseteq\mathbb{R}^{k}$
and $\theta$ a parameter of these densities, $\theta\in\mathbb{R}^{n}$.
An underlying idea in the statement of the new Cramér-Rao inequality
is that it is possible to evaluate the moments of the error with respect
to different probability distributions. For instance, in the estimation
setting the estimation error is $\hat{\theta}(x)-\theta$. The bias
can be evaluated with respect to $f$ according to
\begin{equation}
B_{f}(\theta)=\int_{X}\left(\hat{\theta}(x)-\theta\right)\, f(x;\theta)\,\text{d}x=\mathrm{E}_{f}\left[\hat{\theta}(x)-\theta\right]\label{eq:DefinitionBiasVsf}
\end{equation}
 while a general moment of a norm of the error can be computed with
respect to another distribution, $g(x;\theta)$, as in
\begin{equation}
\mathrm{E}_{g}\left[\left\Vert \hat{\theta}(x)-\theta\right\Vert ^{\beta}\right]=\int_{X}\left\Vert \hat{\theta}(x)-\theta\right\Vert ^{\beta}\, g(x;\theta)\,\text{d}x
\end{equation}
 The distributions $f(x;\theta)$ and $g(x,\theta)$ can be chosen
arbitrarily and are not necessarily directly related. However, $g(x;\theta)$
can be designed as a transformation of $f(x;\theta)$ that highlights,
or perhaps scores out, some characteristics of $f(x;\theta)$. Typically,
$g(x;\theta)$ can be a weighted version of $f(x;\theta)$, i.e.
$g(x;\theta)=h(x;\theta)f(x;\theta).$ The distribution $g(x;\theta)$
can also be a quantized version of $f(x;\theta)$, such as $g(x;\theta)=\left[f(x;\theta)\right],$
where $\left[.\right]$ denotes the integer part. Another important
special case is when $g(x;\theta)$ is defined as the escort distribution
of order $q$ of $f(x;\theta)$, where $q$ plays the role of a tuning
parameter. We will see that this special case, which is particularly
important in the context of nonextensive statistical physics, will
lead to generalized $q$-Gaussians as the extremal functions. We are
now in position to state and prove an extended version of the Cramér-Rao
inequality.
\begin{thm}
\label{GenericCRTheorem} Let $f(x;\theta)$ be a multivariate probability
density function defined over a subset $X\mathbb{\subseteq R}^{n}$,
and $\theta\in\Theta\subseteq\mathbb{R}^{k}$ a parameter of the density.
The set $\Theta$ is equipped with a norm $\|.\|,$ and the corresponding
dual norm is denoted $\|.\|_{*}$. Let $g(x;\theta)$ denote another
probability density function also defined on $(X;\Theta)$. Assume
that $f(x;\theta)$ is a jointly measurable function of $x$ and $\theta,$
is integrable with respect to $x$, is absolutely continuous with
respect to $\theta,$ and that the derivatives with respect to each
component of $\theta$ are locally integrable. For any estimator $\hat{\theta}(x)$
of $\theta$, we have 
\begin{gather}
E_g\left[\left\Vert \hat{\theta}(x)-\theta\right\Vert ^{\alpha}\right]^{\frac{1}{\alpha}}I_{\beta}[f|g;\theta]^{\frac{1}{\beta}}\geq\left|n+\nabla_{\theta}.\, B_{f}(\theta)\right|\label{eq:GeneralizedCramerRao-4}
\end{gather}
with $\alpha$ and $\beta$ Hölder conjugates of each other, i.e.
$\alpha^{-1}+\beta^{-1}=1,$ $\alpha\geq1$, and where the $(\beta,g)$-Fisher
information 
\begin{alignat}{1}
I_{\beta}[f|g;\theta] & =\int_{X}\left\Vert \frac{\nabla_{\theta}f(x;\theta)}{g(x;\theta)}\right\Vert _{*}^{\beta}g(x;\theta)\text{\,\ d}x\label{eq:GeneralizedFisher-4}
\end{alignat}
is the generalized Fisher information of order $\beta$ on the parameter
$\theta$ contained in the distribution $f$ and taken with respect
to $g$. The equality case is obtained if
\begin{equation}
\frac{\nabla_{\theta}f(x;\theta)}{g(x;\theta)}=K\left\Vert \hat{\theta}(x)-\theta\right\Vert ^{\alpha-1}\nabla_{\hat{\theta}(x)-\theta}\|\hat{\theta}(x)-\theta\|,\label{eq:CaseOfEqualityInCR-2}
\end{equation}
with $K>0.$\end{thm}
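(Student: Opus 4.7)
The plan is to identify the quantity $n+\nabla_{\theta}\cdot B_{f}(\theta)$ with a suitable scalar-product integral against $g(x;\theta)$, and then apply the generalized H\"older inequality of Lemma \ref{LemmaGenHolderIneq} with the natural choice $X(x)=\hat{\theta}(x)-\theta$, $Y(x)=\nabla_{\theta}f(x;\theta)/g(x;\theta)$ and weight $w(x)=g(x;\theta)$.

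First I would start from the definition $B_{f}(\theta)=\int_{X}(\hat{\theta}(x)-\theta)\,f(x;\theta)\,\mathrm{d}x$ and take the divergence with respect to $\theta$, interchanging $\nabla_{\theta}\cdot$ and the integral sign. This interchange is precisely what the regularity assumptions in the statement (joint measurability, absolute continuity in $\theta$, local integrability of the partial derivatives) are designed to legitimize. Expanding the divergence of the product gives
\[
\nabla_{\theta}\cdot B_{f}(\theta)=\int_{X}[\nabla_{\theta}\cdot(\hat{\theta}(x)-\theta)]\,f(x;\theta)\,\mathrm{d}x+\int_{X}(\hat{\theta}(x)-\theta)\cdot\nabla_{\theta}f(x;\theta)\,\mathrm{d}x.
\]
Since $\hat{\theta}(x)$ does not depend on $\theta$ and $\nabla_{\theta}\cdot\theta=n$, the first integral reduces to $-n\int_{X}f(x;\theta)\,\mathrm{d}x=-n$. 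Rearranging and multiplying and dividing by $g(x;\theta)$ in the remaining term yields the identity
\[
n+\nabla_{\theta}\cdot B_{f}(\theta)=\int_{X}(\hat{\theta}(x)-\theta)\cdot\frac{\nabla_{\theta}f(x;\theta)}{g(x;\theta)}\,g(x;\theta)\,\mathrm{d}x.
\]

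Next, with the identifications above, Lemma \ref{LemmaGenHolderIneq} directly bounds the absolute value of this right-hand side via (\ref{eq:GenHolderInequality-b}) by
\[
\left(\int_{X}\|\hat{\theta}(x)-\theta\|^{\alpha}\,g(x;\theta)\,\mathrm{d}x\right)^{1/\alpha}\left(\int_{X}\left\|\frac{\nabla_{\theta}f(x;\theta)}{g(x;\theta)}\right\|_{*}^{\beta}g(x;\theta)\,\mathrm{d}x\right)^{1/\beta},
\]
which is exactly $E_{g}[\|\hat{\theta}(x)-\theta\|^{\alpha}]^{1/\alpha}\,I_{\beta}[f|g;\theta]^{1/\beta}$, yielding (\ref{eq:GeneralizedCramerRao-4}). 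The equality condition (\ref{eq:CaseOfEqualityInCR-2}), including the sign $K>0$, is then obtained by translating the saturation condition (\ref{eq:ExtremalGenHolderInequality}) of Lemma \ref{LemmaGenHolderIneq} under these identifications; the positivity of $K$ is forced because the absolute value sits outside the integral in (\ref{eq:GenHolderInequality-b}), so the integrand must be pointwise non-negative.

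The only genuinely delicate point, and therefore the main potential obstacle, is the justification of differentiation under the integral together with the clean accounting $\nabla_{\theta}\cdot(\hat{\theta}(x)-\theta)=-n$; the hypotheses on $f$ are tailored precisely for this. Once the generalized H\"older inequality of Lemma \ref{LemmaGenHolderIneq} is at hand, the rest of the argument is a direct one-line application, and the proof does not require anything beyond this inequality and the elementary divergence computation.
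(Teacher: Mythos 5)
Your proposal is correct and follows essentially the same route as the paper: differentiate the bias under the integral sign, use $\nabla_{\theta}\cdot\theta=n$ to extract the $-n$ term, rewrite the remaining integral as an expectation with respect to $g$, and apply Lemma \ref{LemmaGenHolderIneq} with $X(x)=\hat{\theta}(x)-\theta$, $Y(x)=\nabla_{\theta}f/g$ and weight $g$, with the equality case (and the sign $K>0$) read off from that lemma. The only point the paper makes explicit that you gloss over is the absolute-continuity requirement $g(x;\theta)\gg\nabla_{\theta}f(x;\theta)$ needed to justify dividing by $g$, but this is a minor bookkeeping matter rather than a gap.
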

\begin{proof}
The bias in (\ref{eq:DefinitionBiasVsf}) is a $n$-dimensional vector.
Let us consider its divergence with respect to variations of $\theta$:
\begin{equation}
\mathrm{div}\, B_{f}(\theta)=\nabla_{\theta}.\, B_{f}(\theta).
\end{equation}
The regularity conditions in the statement of the theorem enable to
interchange integration with respect to $x$ and differentiation with
respect to $\theta,$ and 
\begin{equation}
\nabla_{\theta}.\, B_{f}(\theta)=\int_{X}\nabla_{\theta}.\left(\hat{\theta}(x)-\theta\right)\, f(x;\theta)\,\text{d}x+\int_{X}\nabla_{\theta}f(x;\theta).\left(\hat{\theta}(x)-\theta\right)\,\text{d}x.
\end{equation}
In the first term on the right, we have $\nabla_{\theta}.\theta=n$,
and the integral reduces to $-n\int_{X}f(x;\theta)\,\text{d}x=-n,$
since $f(x;\theta)$ is a probability density on $X.$ The second
term can be rearranged so as to obtain an integration with respect
to the density $g(x;\theta),$ assuming that the derivatives with
respect to each component of $\theta$ are absolutely continuous with
respect to $g(x;\theta),$ i.e. $g(x;\theta)\gg\nabla_{\theta}f(x;\theta)$.
This gives
\begin{equation}
n+\nabla_{\theta}.\, B_{f}(\theta)=\int_{X}\frac{\nabla_{\theta}f(x;\theta)}{g(x;\theta)}.\left(\hat{\theta}(x)-\theta\right)\, g(x;\theta)\,\text{d}x.
\end{equation}
Now, it only remains to apply the generalized Hölder-type inequality
(\ref{eq:GenHolderInequality-b}) in Lemma \ref{LemmaGenHolderIneq}
to the integral on the right side, with $X(x)=\hat{\theta}(x)-\theta,$
$Y(x)=\frac{\nabla_{\theta}f(x;\theta)}{g(x;\theta)},$ and $w(x)=g(x;\theta).$
This yields in all generality 
\begin{equation}
\left(\int_{X}\left\Vert \hat{\theta}(x)-\theta\right\Vert ^{\alpha}g(x;\theta)\text{\,\ d}x\right)^{\frac{1}{\alpha}}\left(\int_{X}\left\Vert \frac{\nabla_{\theta}f(x;\theta)}{g(x;\theta)}\right\Vert _{*}^{\beta}g(x;\theta)\text{\,\ d}x\right)^{\frac{1}{\beta}}\geq\left|n+\nabla_{\theta}.\, B_{f}(\theta)\right|
\end{equation}
which is (\ref{eq:GeneralizedCramerRao-4}). By Lemma \ref{LemmaGenHolderIneq}
again, we know that the case of equality occurs if $Y(t)=K\|X(t)\|^{\alpha-1}\nabla_{X(t)}\|X(t)\|,$
$K>0,$ which gives (\ref{eq:CaseOfEqualityInCR-2}).
\end{proof}

\subsection{Main consequences of the general result\label{sub:Main-consequences-of}}

\subsubsection{Case of a $q$-escort distribution\label{sub:Case-of-a-q-escort}}

Let $f(x;\theta)$ and $g(x;\theta)$ be a pair of of $q$-escort
distributions linked by 
\begin{equation}
f(x;\theta)=\frac{g(x;\theta)^{q}}{M_{q}\left[g;\theta\right]}\,\,\,\,\text{ and }\,\,\, g(x;\theta)=\frac{f(x;\theta)^{\bar{q}}}{M_{\bar{q}}\left[f;\theta\right]},\label{eq:PairEscorts}
\end{equation}
with $q>0$, $\bar{q}=1/q$, and the information generating function
$M_{q}\left[g;\theta\right]$ defined by 
\begin{equation}
M_{q}\left[g;\theta\right]=\int_{X}g(x;\theta)\,\text{d}x.
\end{equation}
As usual, we will denote by $E_{q}\left[.\right]$ the $q$-expectation,
which is the expectation taken with respect to an escort distribution
of order $q$. Here we see that the expectation with respect to $f(x;\theta)$
is also the $q$-expectation with respect to $g(x;\theta).$ Let us
also recall that the inverse function of the deformed $q$-exponential
(\ref{eq:defExpq-1}), the so-called $q$-logarithm, is defined by
\begin{equation}
\ln_{q}(x):=\frac{x^{1-q}-1}{1-q}.\label{eq:defLnq-1-1}
\end{equation}
With these notations, we have the following corollary of the general
Cramér-Rao inequality.
\begin{cor}
\label{Corollary1}For the pair of escort distributions (\ref{eq:PairEscorts}),
the equivalent Cramér-Rao inequalities 
\begin{alignat}{1}
E\left[\left\Vert \hat{\theta}(x)-\theta\right\Vert ^{\alpha}\right]^{\frac{1}{\alpha}}\, I_{\beta,q}\left[f|g;\theta\right]^{\frac{1}{\beta}} & \geq\left|n+\nabla_{\theta}.E_{q}\left[\hat{\theta}(x)-\theta\right]\right|\label{eq:GenqCR1}\\
E_{\bar{q}}\left[\left\Vert \hat{\theta}(x)-\theta\right\Vert ^{\alpha}\right]^{\frac{1}{\alpha}}\, I_{\beta,q}\left[f|g;\theta\right]^{\mbox{\ensuremath{\frac{1}{\beta}}}} & \geq\left|n+\nabla_{\theta}.E\left[\hat{\theta}(x)-\theta\right]\right|,\label{eq:GenqCR2}
\end{alignat}
 hold, where the generalized ($\beta,q$)-Fisher information is given
by
\begin{align}
I_{\beta,q}\left[f|g;\theta\right] & =\frac{1}{M_{q}\left[g;\theta\right]^{\beta}}\, E\left[g(x;\theta)^{\beta(q-1)}\left\Vert \nabla_{\theta}\ln\frac{g(x;\theta)^{q}}{M_{q}\left[g;\theta\right]}\right\Vert _{*}^{\beta}\right]\label{eq:GenqFish1}\\
 & =M_{\bar{q}}\left[f;\theta\right]^{\beta}\,\, E_{\bar{q}}\left[f(x;\theta)^{\beta(1-\bar{q})}\left\Vert \nabla_{\theta}\ln f(x;\theta)\right\Vert _{*}^{\beta}\right],\label{eq:GenqFish2}
\end{align}
and where equality occurs if
\begin{equation}
\nabla_{\theta}\ln_{\bar{q}}f(x;\theta)=K\left\Vert \hat{\theta}(x)-\theta\right\Vert ^{\alpha-1}\nabla_{\hat{\theta}(x)-\theta}\|\hat{\theta}(x)-\theta\|,\label{eq:EqualityInqCR}
\end{equation}
with $K>0.$\end{cor}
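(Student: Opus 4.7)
The plan is to apply Theorem \ref{GenericCRTheorem} directly to the pair $(f,g)$ and then to re-express the resulting ingredients in escort notation; the two apparently distinct inequalities (\ref{eq:GenqCR1}) and (\ref{eq:GenqCR2}) will then emerge as the same bound written under two complementary conventions for the ``base'' distribution of the unsubscripted expectation.

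Theorem \ref{GenericCRTheorem} applied to $(f,g)$ reads
\begin{equation}
E_g\bigl[\|\hat\theta-\theta\|^\alpha\bigr]^{1/\alpha}\,I_\beta[f|g;\theta]^{1/\beta}\ge\bigl|n+\nabla_\theta\cdot B_f(\theta)\bigr|,
\end{equation}
with $B_f(\theta)=E_f[\hat\theta(x)-\theta]$. Since $f=g^q/M_q[g]$ is the $q$-escort of $g$, I would identify $E_g[\cdot]=E[\cdot]$ and $E_f[\cdot]=E_q[\cdot]$ to read off (\ref{eq:GenqCR1}). Dually, since $g=f^{\bar q}/M_{\bar q}[f]$ is the $\bar q$-escort of $f$, the very same inequality becomes (\ref{eq:GenqCR2}) upon relabelling $E_f[\cdot]=E[\cdot]$ and $E_g[\cdot]=E_{\bar q}[\cdot]$; the two forms are therefore strictly equivalent.

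For the two expressions of the Fisher information I would compute $\nabla_\theta f/g$ in both directions. Using the score identity $\nabla_\theta f=f\nabla_\theta\ln f$ together with $f/g=g^{q-1}/M_q[g]$ yields
\begin{equation}
\frac{\nabla_\theta f}{g}=\frac{g^{q-1}}{M_q[g]}\,\nabla_\theta\ln\frac{g^q}{M_q[g]},
\end{equation}
which, substituted into $I_\beta[f|g;\theta]=\int\|\nabla_\theta f/g\|_*^\beta\,g\,\mathrm{d}x$ and recognised as an expectation under $g$, gives (\ref{eq:GenqFish1}). Symmetrically, from $g=f^{\bar q}/M_{\bar q}[f]$ one obtains $\nabla_\theta f/g=M_{\bar q}[f]\,f^{1-\bar q}\nabla_\theta\ln f$, which upon integration against $g\,\mathrm{d}x$ (the $\bar q$-escort measure of $f$) is precisely (\ref{eq:GenqFish2}); that the two formulae coincide is just this computation read both ways.

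Finally, for the equality clause I would use the chain-rule identity $\nabla_\theta\ln_{\bar q}f=f^{-\bar q}\nabla_\theta f$, which follows from $\frac{\mathrm{d}}{\mathrm{d}y}\ln_{\bar q}y=y^{-\bar q}$, together with $g=f^{\bar q}/M_{\bar q}[f]$, to conclude
\begin{equation}
\frac{\nabla_\theta f}{g}=M_{\bar q}[f]\,\nabla_\theta\ln_{\bar q}f.
\end{equation}
Inserting this into the equality condition (\ref{eq:CaseOfEqualityInCR-2}) of Theorem \ref{GenericCRTheorem} and absorbing the positive factor $M_{\bar q}[f](\theta)$ into the constant $K$ delivers (\ref{eq:EqualityInqCR}). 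The entire argument is essentially bookkeeping: the only mild care required is consistent tracking of which of $f,g$ plays the role of base distribution in each of (\ref{eq:GenqCR1})--(\ref{eq:GenqCR2}), and I do not foresee any deeper obstacle.
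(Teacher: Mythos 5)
Your proposal is correct and follows essentially the same route as the paper: apply Theorem \ref{GenericCRTheorem} to the pair $(f,g)$, reinterpret $E_f$ and $E_g$ in escort notation to obtain the two equivalent forms, compute $\nabla_\theta f/g$ both ways for the two expressions of $I_{\beta,q}$, and use $f^{1-\bar q}\nabla_\theta\ln f=\nabla_\theta\ln_{\bar q}f$ (absorbing the positive factor $M_{\bar q}[f;\theta]$ into $K$, exactly as the paper does) for the equality condition. Your write-up merely spells out the ``direct calculation'' that the paper leaves implicit.
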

\begin{proof}
The Cramér-Rao inequalities (\ref{eq:GenqCR1}) and (\ref{eq:GenqCR2})
directly follow from the general Cramér-Rao inequality (\ref{eq:GeneralizedCramerRao-4}),
the relations (\ref{eq:PairEscorts}) between $f(x;\theta)$ and $g(x;\theta)$,
and the notation of $q$-expectations. The expressions of the generalized
($\beta,q$)-Fisher information also follow by direct calculation.
Finally, the equality condition yields 
\begin{equation}
f(x;\theta)^{(1-\bar{q})}\nabla_{\theta}\ln f(x;\theta)=K\left\Vert \hat{\theta}(x)-\theta\right\Vert ^{\alpha-1}\nabla_{\hat{\theta}(x)-\theta}\|\hat{\theta}(x)-\theta\|.
\end{equation}
Noticing that the term on the left is nothing but the gradient of
the deformed $q$-logarithm, $f(x;\theta)^{(1-\bar{q})}\nabla_{\theta}\ln f(x;\theta)=\nabla_{\theta}\ln_{\bar{q}}\, f(x;\theta)$,
we immediately obtain (\ref{eq:EqualityInqCR}).
\end{proof}

\subsubsection{Case of a translation family\label{sub:Case-of-a-translation}}

In the particular case of a translation parameter, the generalized
Cramér-Rao inequality induces a new class of inequalities. 

Let $\theta\in\mathbb{R}^{n}$ be a location parameter, $x\in X\subseteq\mathbb{R}^{n}$,
and define by $f(x;\theta)$ the family of density $f(x;\theta)=f(x-\theta)$.
In this case, we have $\nabla_{\theta}f(x;\theta)=-\nabla_{x}f(x-\theta),$
provided that $f$ is differentiable at $x-\theta$, and the Fisher
information becomes a characteristic of the information in the distribution.
If $X$ is a bounded subset, we will assume that $f(x)$ vanishes
and is differentiable on the boundary $\partial X$ (otherwise the
Fisher information defined for the function extended to $\mathbb{R}^{n}$
is not defined). 

Let us denote by $\mu_{f}$ the mean of $f(x).$ We immediately get
that the mean of $f(x;\theta)$ is $(\mu_{f}+\theta)$, so that an
unbiased estimator of $\theta$ could be $\hat{\theta}(x)=x-\mu_{f}$.
If we choose $\hat{\theta}(x)=x$, the estimator will be biased, $B_{f}(\theta)=$$E_{q}\left[\hat{\theta}(x)-\theta\right]=\mu_{f}$,
but independent of $\theta$, so that the gradient of the bias with
respect to $\theta$ is zero. In these conditions, the generalized
Cramér-Rao inequality becomes 
\begin{equation}
\left(\int\left\Vert x-\theta\right\Vert ^{\alpha}g(x;\theta)\text{\,\ d}x\right)^{\frac{1}{\alpha}}\left(\int\left\Vert \frac{\nabla_{x}f(x-\theta)}{g(x;\theta)}\right\Vert _{*}^{\beta}g(x;\theta)\text{\,\ d}x\right)^{\frac{1}{\beta}}\geq n.
\end{equation}
Furthermore, we can also choose $\theta=0,$ and obtain, as a corollary,
the following interesting functional inequality. 
\begin{cor}
\label{Corollary2}Let $f(x)$ and $g(x)$ be two multivariate probability
density functions defined over a subset $X$ of $\mathbb{R}^{n}$.
Assume that $f(x)$ is a measurable differentiable function of $x$,
which vanishes and is differentiable on the boundary $\partial X$,
that $\nabla_{x}f(x)$ is absolutely continuous with respect to $g(x)$,
and finally that the involved integrals exist and are finite. Then,
the following inequality holds
\begin{equation}
\left(\int_{X}\left\Vert x\right\Vert ^{\alpha}g(x)\text{\,\ d}x\right)^{\frac{1}{\alpha}}\left(\int_{X}\left\Vert \frac{\nabla_{x}f(x)}{g(x)}\right\Vert _{*}^{\beta}g(x)\text{\,\ d}x\right)^{\frac{1}{\beta}}\geq n,\label{eq:CRInequalityLocation}
\end{equation}
with equality if (and only if when the dual norm is strictly convex)
\begin{equation}
\nabla_{x}f(x)=-K\, g(x)\|x\|^{\alpha-1}\nabla_{x}\|x\|,\label{eq:EqualityInCRInequalityLocation}
\end{equation}
with $K>0.$ 
\end{cor}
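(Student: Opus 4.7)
The plan is to obtain this as a direct specialization of Theorem~\ref{GenericCRTheorem} to a translation family, exactly along the lines sketched in the paragraphs immediately preceding the Corollary. Concretely, I would apply Theorem~\ref{GenericCRTheorem} to the parametric family $f(x;\theta) := f(x-\theta)$ with auxiliary density $g(x;\theta) := g(x-\theta)$, parameter $\theta \in \mathbb{R}^{n}$, and the trivial estimator $\hat{\theta}(x) = x$. Under this choice the bias is $B_{f}(\theta) = \mathrm{E}_{f}[x-\theta] = \mu_{f}$, a vector independent of $\theta$, so $\nabla_{\theta}\cdot B_{f}(\theta) = 0$ and the right-hand side of the Theorem collapses to $n$.

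By the chain rule, $\nabla_{\theta} f(x-\theta) = -\nabla_{x} f(x-\theta)$. Substituting into the $(\beta,g)$-Fisher information and the $\alpha$-moment appearing in Theorem~\ref{GenericCRTheorem}, the inequality reads
\[
\left(\int \|x-\theta\|^{\alpha}\, g(x-\theta)\,\mathrm{d}x\right)^{1/\alpha}\left(\int \left\|\frac{\nabla_{x} f(x-\theta)}{g(x-\theta)}\right\|_{*}^{\beta} g(x-\theta)\,\mathrm{d}x\right)^{1/\beta} \geq n.
\]
Both sides are translation-invariant in $\theta$ (the change of variable $y = x-\theta$ absorbs the shift in each integral), so it suffices to set $\theta = 0$, which is precisely (\ref{eq:CRInequalityLocation}).

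For the equality case, condition (\ref{eq:CaseOfEqualityInCR-2}) of Theorem~\ref{GenericCRTheorem}, evaluated at $\theta=0$ with $\hat{\theta}(x)=x$, becomes
\[
-\frac{\nabla_{x} f(x)}{g(x)} \;=\; K\,\|x\|^{\alpha-1}\nabla_{x}\|x\|, \qquad K>0,
\]
which rearranges to (\ref{eq:EqualityInCRInequalityLocation}). The ``only if'' clause under strict convexity of the dual norm is inherited from the uniqueness statement in Lemma~\ref{LemmaGenHolderIneq}, propagated through Theorem~\ref{GenericCRTheorem}.

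I do not expect any real obstacle: the core analytic content (the generalized Hölder-type inequality, the interchange of differentiation and integration, and the identification of the extremizer) has already been done in Lemma~\ref{LemmaGenHolderIneq} and Theorem~\ref{GenericCRTheorem}. The only routine verification is that the regularity hypotheses stated in the Corollary (differentiability of $f$, vanishing on $\partial X$, absolute continuity of $\nabla_{x} f$ with respect to $g$, and finiteness of the integrals) do supply the hypotheses required by Theorem~\ref{GenericCRTheorem} for the translated family; here the boundary-vanishing condition is what legitimizes differentiating under the integral sign on a possibly bounded $X$.
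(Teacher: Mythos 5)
Your proposal is correct and follows essentially the same route as the paper: the paper derives the Corollary by specializing Theorem~\ref{GenericCRTheorem} to the translation family $f(x;\theta)=f(x-\theta)$ with $\hat{\theta}(x)=x$, noting that the bias is constant so its divergence vanishes, using $\nabla_{\theta}f(x;\theta)=-\nabla_{x}f(x-\theta)$, and then setting $\theta=0$, with the equality condition and its uniqueness under strict convexity of the dual norm inherited from Lemma~\ref{LemmaGenHolderIneq} exactly as you describe. The only cosmetic difference is that the paper keeps $g(x;\theta)$ general rather than assuming it is itself a translation family, which is immaterial once $\theta=0$ is chosen.
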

As an elementary application, let us consider the univariate case,
with $X=[0,1].$ Let us take for $g(x)$ the uniform distribution
on the interval. Finally, let us choose for $f(x)$ a $\beta$-distribution:
$f(x)=x^{a-1}(1-x)^{b-1}/B(a,b),$ with $B(a,b)$ the $\beta$-function.
Firstly, we obviously have $\int_{0}^{1}|x|^{\alpha}\text{d}x=1.$
Secondly, $f'(x)=\left((a-1)x^{a-2}(1-x)^{b-1}+(b-1)x^{a-1}(1-x)^{b-2}\right),$
so that the inequality is 
\begin{equation}
\left(\int_{0}^{1}\left|(a-1)x^{a-2}(1-x)^{b-1}+(b-1)x^{a-1}(1-x)^{b-2}\right|{}^{\beta}\text{\,\ d}x\right)^{\frac{1}{\beta}}\geq B(a,b).
\end{equation}
Taking now $\beta=1$ and $a>1,\, b>1$, we obtain the following
inequality for $\beta$-functions: 
\begin{equation}
(a-1)B(a-1,b)+(b-1)B(a,b-1)\geq B(a,b)
\end{equation}

\subsubsection{Case of a location parameter within a pair of escort distributions\label{sub:Case-of-both}}

By combining the two aspects presented above, namely the case of a
pair of escort distributions and the case of a location parameter,
we will obtain a new Cramér-Rao inequality saturated by multivariate
generalized $q$-Gaussians. This provides a new information theoretic
characterization of generalized $q$-Gaussian and extend our previous
results to the multivariate case and arbitrary norms. As in Corollary\,\ref{Corollary1},
we use a pair of of $q$-escort distributions: 
\begin{equation}
f(x)=\frac{g(x)^{q}}{M_{q}\left[g\right]}\,\,\,\,\text{ and }\,\,\, g(x)=\frac{f(x)^{\bar{q}}}{M_{\bar{q}}\left[f\right]},\label{eq:PairEscorts-1}
\end{equation}
with $\bar{q}=1/q,$ and we denote by $E\left[.\right]$ the standard
expectation with respect to $g(x)$, and by $E_{\bar{q}}\left[.\right]$
the $\bar{q}$-expectation with respect to $f(x)$, which is simply
the standard expectation taken with respect to the escort $f(x)^{\bar{q}}/M_{\bar{q}}\left[f\right].$
In the statement of the following corollary, we will use the deformed
exponential and logarithm defined in (\ref{eq:defExpq-1}),(\ref{eq:defLnq-1-1}).
We will also use the notation $q_{*}=2-q$ that changes the quantities
$(1-q_{*})$ into $(q-1)$. 
\begin{cor}
Let $g(x)$ be a multivariate probability density function defined
over a subset $X\subseteq\mathbb{R}^{n}$. Assume that $g(x)$ is
a measurable differentiable function of $x$, which vanishes and is
differentiable on the boundary $\partial X$, and finally that the
involved integrals exist and are finite. Then, for the pair of escort
distributions (\ref{eq:PairEscorts-1}), the following $q$-Cramér-Rao
inequality holds
\begin{equation}
m_{\alpha}[g]^{\frac{1}{\alpha}}\, I_{\beta,q}\left[g\right]^{\frac{1}{\beta}}\geq n\label{eq:ExtendedqCRloc1}
\end{equation}
with
\begin{equation}
\begin{cases}
\begin{array}{l}
m_{\alpha}[g]=E\left[\left\Vert x\right\Vert ^{\alpha}\right]\\
\begin{alignedat}[t]{1}I_{\beta,q}\left[g\right] & =\left(q/M_{q}\left[g\right]\right)^{\beta}\, E\left[g(x)^{\beta(q-1)}\left\Vert \nabla_{x}\ln g(x)\right\Vert _{*}^{\beta}\right]\\
 & =\left(q/M_{q}\left[g\right]\right)^{\beta}\, E\left[\left\Vert \nabla_{x}\ln_{q*}g(x)\right\Vert _{*}^{\beta}\right],
\end{alignedat}
\end{array}\end{cases}
\end{equation}
where $\alpha$ and $\beta$ are Hölder conjugates of each other,
i.e. $\alpha^{-1}+\beta^{-1}=1,$ $\alpha\geq1$, and where \textup{$I_{\beta,q}\left[g\right]$}
denotes the generalized ($\beta,q$)-Fisher information.

In terms of $\bar{q}$-expectations with respect to $f(x),$ it can
also be written

\begin{equation}
m_{\alpha,\bar{q}}[f]^{\frac{1}{\alpha}}\,\bar{I}_{\beta,q}\left[f\right]^{\frac{1}{\beta}}\geq n\label{eq:ExtendedqCRloc2}
\end{equation}
with
\begin{equation}
\begin{cases}
\begin{array}{l}
m_{\alpha,\bar{q}}[f]=E_{\bar{q}}\left[\left\Vert x\right\Vert ^{\alpha}\right]\\
\begin{alignedat}[t]{1}\bar{I}_{\beta,q}\left[f\right] & =M_{\bar{q}}\left[f\right]^{\beta}\,\, E_{\bar{q}}\left[f(x)^{\beta(1-\bar{q})}\left\Vert \nabla_{x}\ln f(x)\right\Vert _{*}^{\beta}\right]\\
 & =M_{\bar{q}}\left[f\right]^{\beta}\,\, E_{\bar{q}}\left[\left\Vert \nabla_{x}\ln_{\bar{q}}f(x)\right\Vert _{*}^{\beta}\right].
\end{alignedat}
\end{array}\end{cases}
\end{equation}

In both cases, equality occurs if
\begin{equation}
f(x)\propto\exp_{\bar{q}}\left(-\gamma\left\Vert x\right\Vert ^{\alpha}\right),\,\,\text{ or equivalently }\,\, g(x)\propto\exp_{q_{*}}\left(-\gamma\left\Vert x\right\Vert ^{\alpha}\right),\mathrm{\,\, with}\,\,\gamma>0.\label{eq:EqualityInqCR-1}
\end{equation}
If the dual norm is strictly convex, then this generalized $q$-Gaussian
is the unique probability density function that achieves the equality
in the extended Cramér-Rao inequalities. \end{cor}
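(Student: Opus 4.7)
The plan is to derive this corollary by combining the two special cases worked out in the previous subsections: the pair of $q$-escort distributions (as in Corollary \ref{Corollary1}) together with the location-parameter specialization (as in Corollary \ref{Corollary2}). Since $g$ is differentiable and vanishes on $\partial X$, the hypotheses of Corollary \ref{Corollary2} apply to the pair $(f,g)$, giving the location-parameter inequality
\begin{equation*}
\left(\int_X \|x\|^\alpha g(x)\,\mathrm{d}x\right)^{\frac{1}{\alpha}}\left(\int_X \left\|\frac{\nabla_x f(x)}{g(x)}\right\|_*^\beta g(x)\,\mathrm{d}x\right)^{\frac{1}{\beta}} \geq n.
\end{equation*}

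Next I would exploit the escort relation $f=g^{q}/M_q[g]$ to rewrite the Fisher factor. A direct chain-rule computation gives $\nabla_x f = (q/M_q[g])\, g^{q-1}\nabla_x g$, hence
\begin{equation*}
\frac{\nabla_x f(x)}{g(x)} = \frac{q}{M_q[g]}\, g(x)^{q-1}\,\nabla_x\ln g(x) = \frac{q}{M_q[g]}\,\nabla_x \ln_{q_*} g(x),
\end{equation*}
using that $\ln_{q_*}(y)=(y^{q-1}-1)/(q-1)$ so that $\nabla_x \ln_{q_*} g = g^{q-1}\nabla_x \ln g$. Substituting this identity back produces exactly $I_{\beta,q}[g]$ in both forms claimed, and yields (\ref{eq:ExtendedqCRloc1}). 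The equivalent form (\ref{eq:ExtendedqCRloc2}) in terms of $\bar q$-expectations follows by using the companion relation $g=f^{\bar q}/M_{\bar q}[f]$ to convert the integrals against $g\,\mathrm{d}x$ into integrals against $f^{\bar q}/M_{\bar q}[f]\,\mathrm{d}x$, and by recognizing $f^{(1-\bar q)}\nabla_x\ln f = \nabla_x\ln_{\bar q} f$ as in Corollary \ref{Corollary1}.

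For the equality case, I would invoke the equality condition (\ref{eq:EqualityInCRInequalityLocation}) from Corollary \ref{Corollary2}, namely $\nabla_x f(x) = -K g(x)\|x\|^{\alpha-1}\nabla_x\|x\|$ with $K>0$. Combined with the identity above, this is equivalent to
\begin{equation*}
\nabla_x \ln_{q_*} g(x) = -\frac{K M_q[g]}{q}\,\|x\|^{\alpha-1}\nabla_x\|x\| = -\frac{K M_q[g]}{q\alpha}\,\nabla_x\|x\|^\alpha,
\end{equation*}
since $\nabla_x\|x\|^\alpha = \alpha\|x\|^{\alpha-1}\nabla_x\|x\|$. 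Integrating yields $\ln_{q_*} g(x) = -\gamma'\|x\|^\alpha + C$ for some constants $\gamma'>0$ and $C$; undoing the $q$-logarithm gives $g(x)^{q-1} = A - B\|x\|^\alpha$ for positive constants $A,B$, which, after factoring $A$ out and relabelling the constant, is exactly of the form $g(x)\propto\exp_{q_*}(-\gamma\|x\|^\alpha)$ for a suitable $\gamma>0$ fixed by the normalization of $g$. The equivalent statement $f(x)\propto \exp_{\bar q}(-\gamma\|x\|^\alpha)$ follows from $f=g^q/M_q[g]$, using $(\exp_{q_*}(u))^{q} \propto \exp_{\bar q}(\text{const}\cdot u)$ up to the constant absorbed in the normalization.

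The main obstacle I anticipate is the uniqueness assertion under strict convexity of the dual norm. For this step I would appeal to the uniqueness clause in Lemma \ref{LemmaGenHolderIneq}, which itself rests on the uniqueness of the vector satisfying (a),(b) of Lemma \ref{LemmeGradient}. Strict convexity of $\|\cdot\|_*$ forces the equality vector $Y(t)$ in the generalized H\"older inequality to be uniquely $K\|X(t)\|^{\alpha-1}\nabla_{X(t)}\|X(t)\|$; tracing this uniqueness back through the substitution $Y(x)=\nabla_x f(x)/g(x)$, $X(x)=x$ shows that the gradient equation above has a unique solution up to the normalization constant, so $g$ (hence $f$) is uniquely of the stated $q$-Gaussian form. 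A minor check to do along the way is that $\gamma>0$ (i.e.\ the sign is correct): this is forced by $K>0$ together with positivity of $M_q[g]$ and integrability of $g$, which rules out the sign that would produce a non-integrable or non-compactly-supported profile.
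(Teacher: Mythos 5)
Your proposal is correct and follows essentially the same route as the paper: specialize the general Cram\'er--Rao theorem to the escort pair together with the location-parameter case, rewrite $\nabla_{x}f/g$ via the chain rule as $(q/M_{q}[g])\,\nabla_{x}\ln_{q_{*}}g$ to obtain $I_{\beta,q}[g]$, and integrate the equality condition $\nabla_{x}\ln_{q_{*}}g\propto-\nabla_{x}\|x\|^{\alpha}$ to recover the generalized $q$-Gaussian, with uniqueness traced back to the strict-convexity clause of the H\"older-type lemma. The only place the paper is more explicit is in ruling out a negative constant of integration (it argues that a negative constant would force $g$ to vanish near the origin and develop a singularity, contradicting the assumed differentiability), whereas you appeal to integrability and normalization; both arguments close that minor gap.
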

\begin{proof}
As indicated above, the result is a direct consequence of Theorem\,\ref{GenericCRTheorem}
in the case of a pair of escort distributions and of the estimation
of a location parameter, with $\hat{\theta}(x)=x$ and $\theta=0.$
Using (\ref{eq:PairEscorts-1}), the condition for equality (\ref{eq:EqualityInCRInequalityLocation})
becomes 
\begin{equation}
g(x)^{q-1}\nabla_{x}g(x)=-K\,\|x\|^{\alpha-1}\nabla_{x}\|x\|\, g(x).
\end{equation}
From this equation, we see that $g(x)$ will only be a function of
the norm of $x$, and therefore will be radially symmetric. Furthermore,
we see that the gradient of $g(x)$ behaves as the negative of the
gradient of $\|x\|.$ This means that $g(x),$ which is a function
of $\|x\|,$ is non increasing with $\|x\|.$  

For $g(x)\neq0,$ the equality condition can be written 
\begin{equation}
g(x)^{q-2}\nabla_{x}g(x)=\frac{1}{q-1}\nabla_{x}g(x)^{q-1}=-\frac{K}{\alpha}\,\nabla_{x}\|x\|^{\alpha},
\end{equation}
which, after integration of the two sides, gives 
\begin{equation}
g(x)^{q-1}=-\frac{K}{\alpha}(q-1)\|x\|^{\alpha}+C,
\end{equation}
where $C$ is a constant of integration. Since $g(x)$ is a probability
density function, the solution is restricted to the domain where the
right hand side is non negative, and $g(x)=0$ elsewhere. In particular,
when $C$ is negative, we see that $g(x)$ vanishes around the origin
and presents a singularity at $\|x\|=C\alpha/K(q-1)$. Since we assumed
$g(x)$ differentiable everywhere, this solution must be discarded. 

Therefore, the constant of integration $C$ must be positive, and
\begin{equation}
g(x)\propto\left(1-\frac{K}{C\alpha}(q-1)\|x\|^{\alpha}\right)^{\frac{1}{q-1}}\propto\exp_{q_{*}}\left(-\gamma\left\Vert x\right\Vert ^{\alpha}\right)
\end{equation}
which is (\ref{eq:EqualityInqCR-1}). The expression of $f(x)$ simply
follows from the fact that $f(x)$ is the escort distribution of $g(x).$
\end{proof}

\section{Another Cramér-Rao inequality saturated by generalized Gaussians\label{sec:Another-Cram=0000E9r-Rao-inequality}}

We finish this paper with another Cramér-Rao type inequality, which
involves a variant $\phi_{\beta,q}\left[g\right]$ of the generalized
Fisher information $I_{\beta,q}\left[g\right]$ above, and which is
also saturated by generalized $q$-Gaussian distributions. However,
this inequality is less directly related to estimation results than
the inequality (\ref{eq:ExtendedqCRloc1}) which is just a special
case of the general Cramér-Rao inequality. The monodimensional version
of this inequality has been established by \cite{lutwak_Cramer_2005},
and extended to the multidimensional case in \cite{lutwak_extensions_2012}
and in \cite{bercher__2012} in the case of an Euclidean norm. Actually,
the inequality can readily be stated and proved for a general norm. 
\begin{thm}
\label{theo2} For $n\geq1,$ $\beta$ and $\alpha$ Hölder conjugates
of each other, $\alpha>1,$ $q>\max\left\{ (n-1)/n,\, n/(n+\alpha)\right\} $
then for any probability density $g$ on $\mathbb{R}^{n}$, supposed
continuously differentiable and such that the involved information
measures are finite,
\begin{equation}
m_{\alpha}\left[g\right]^{\frac{1}{\alpha}}\,\phi_{\beta,q}\left[g\right]^{\frac{1}{\beta\lambda}}\geq m_{\alpha}\left[G\right]^{\frac{1}{\alpha}}\,\phi_{\beta,q}\left[G\right]^{\frac{1}{\beta\lambda}}\label{eq:GeneralizedCramerLutwak}
\end{equation}
with $\lambda=n(q-1)+1$ and where the general Fisher information
is given by 
\begin{equation}
\phi_{\beta,q}\left[g\right]=\left(M_{q}\left[g\right]/q\right)^{\beta}I_{\beta,q}\left[g\right]=\, E\left[g(x)^{\beta(q-1)}\left\Vert \nabla_{x}\ln g(x)\right\Vert _{*}^{\beta}\right],
\end{equation}
and where the equality holds iff $g$ is a generalized Gaussian $g=G_{\gamma}.$
\end{thm}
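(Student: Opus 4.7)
The plan is to deduce (\ref{eq:GeneralizedCramerLutwak}) from the generalized $q$-Cram\'er-Rao inequality (\ref{eq:ExtendedqCRloc1}) of Corollary~4, combined with a Stam-type moment-entropy inequality relating $m_\alpha[g]$ and $M_q[g]$, both of which are saturated by the family of generalized $q$-Gaussians $\{G_\gamma\}_{\gamma>0}$. As a preliminary remark, the two sides of (\ref{eq:GeneralizedCramerLutwak}) are invariant under the dilation $g\mapsto g_t$ with $g_t(x)=t^n g(tx)$: a direct change of variables gives $m_\alpha[g_t]=t^{-\alpha}m_\alpha[g]$ and $\phi_{\beta,q}[g_t]=t^{\beta\lambda}\phi_{\beta,q}[g]$, so the two exponents balance exactly. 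This homogeneity is consistent with equality being achieved on the whole one-parameter family indexed by~$\gamma$.

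The first main step is to rewrite (\ref{eq:ExtendedqCRloc1}) as $m_\alpha[g]^{1/\alpha}\phi_{\beta,q}[g]^{1/\beta}\geq (n/q)M_q[g]$, raise it to the positive power $1/\lambda$, and multiply through by $m_\alpha[g]^{(\lambda-1)/(\alpha\lambda)}$. This produces
\[
m_\alpha[g]^{1/\alpha}\phi_{\beta,q}[g]^{1/(\beta\lambda)}\geq (n/q)^{1/\lambda}\bigl(m_\alpha[g]^{(\lambda-1)/\alpha}M_q[g]\bigr)^{1/\lambda}.
\]
The left-hand side is precisely the quantity appearing in (\ref{eq:GeneralizedCramerLutwak}); since (\ref{eq:ExtendedqCRloc1}) is an equality at any $G_\gamma$, the target inequality reduces to a scale-invariant bound on the combination $m_\alpha[g]^{(\lambda-1)/\alpha}M_q[g]$ in the direction given by its value at $G_\gamma$.

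This second step is the main obstacle. It is a moment-entropy inequality identifying the $q$-Gaussians as the extremizers of the scale-invariant quantity $m_\alpha^{(\lambda-1)/\alpha}M_q$; this is essentially the classical variational characterization of $q$-Gaussians as the maximizers of R\'enyi (or equivalently Tsallis) entropy under an $\alpha$-moment constraint, a standard result of nonextensive thermostatistics. By scale invariance it suffices to consider $g$ with $m_\alpha[g]=m_\alpha[G_\gamma]$, after which the problem reduces to the maximum-entropy comparison for $M_q$. The sign of $\lambda-1$ determines whether the relevant extremum is a minimum or a maximum, and selecting the weights accordingly---or equivalently passing through the equivalent dual formulation (\ref{eq:ExtendedqCRloc2}) written in $\bar q$-expectations of the escort $f$---handles the two cases $q\gtrless 1$ uniformly in the admissible range. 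Combining the Cram\'er-Rao step with the moment-entropy step then yields (\ref{eq:GeneralizedCramerLutwak}); equality forces simultaneous equality in both steps, which by the equality analysis of Corollary~4 occurs precisely when $g=G_\gamma$ for some $\gamma>0$.
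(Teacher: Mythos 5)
Your first step is sound and coincides with the paper's own starting point: rewriting (\ref{eq:ExtendedqCRloc1}) as $m_\alpha[g]^{1/\alpha}\phi_{\beta,q}[g]^{1/\beta}\geq (n/q)M_q[g]$ and raising it to the power $1/\lambda$ is exactly the elimination performed there. The gap is in your second step. Writing $N_q[g]=M_q[g]^{1/(1-q)}$ and $\lambda-1=n(q-1)$, the combination you need to bound from below satisfies
\[
m_\alpha[g]^{(\lambda-1)/\alpha}\,M_q[g]=\left(\frac{m_\alpha[g]^{1/\alpha}}{N_q[g]^{1/n}}\right)^{n(q-1)},
\]
and the maximum-R\'enyi-entropy characterization of the $q$-Gaussians that you invoke is precisely the moment--entropy inequality (\ref{eq:momententropy}), which bounds the ratio $m_\alpha[g]^{1/\alpha}/N_q[g]^{1/n}$ from \emph{below}. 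Raised to the power $n(q-1)$, this yields the lower bound you need only when $q>1$. For $\max\left\{(n-1)/n,\,n/(n+\alpha)\right\}<q<1$ (the heavy-tailed regime, a nonempty and essential part of the theorem's range) the exponent $n(q-1)$ is negative, the inequality reverses, and the $q$-Gaussian \emph{maximizes} $m_\alpha^{(\lambda-1)/\alpha}M_q$ rather than minimizing it; indeed $N_q[g]$ can be driven to zero at fixed moment, so no lower bound of the type you require exists. Your proposed remedies (``selecting the weights accordingly'' or passing to the dual form (\ref{eq:ExtendedqCRloc2})) do not repair this: any bound built only from the moment and the entropy power points the wrong way for $q<1$, so the obstruction is structural rather than a matter of bookkeeping.

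What the paper does in the range $q<1$ is bring in a genuinely different ingredient: the generalized Stam inequality (\ref{eq:GeneralizedStamInequality}), $\phi_{\beta,q}[g]^{1/(\beta\lambda)}N_q[g]^{1/n}\geq\phi_{\beta,q}[G]^{1/(\beta\lambda)}N_q[G]^{1/n}$, derived from the sharp Gagliardo--Nirenberg inequality of Cordero-Erausquin, Nazaret and Villani; this involves the Fisher functional itself and not merely $m_\alpha$ and $M_q$. Eliminating $N_q$ between this and the Cram\'er--Rao inequality (using $M_q=N_q^{1-q}$ with $1-q>0$) gives (\ref{eq:GeneralizedCramerLutwak}) for $q<1$, while the moment--entropy route you describe is reserved for $q>1$. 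As written, your argument therefore establishes the theorem only for $q>1$; completing it requires importing this additional sharp functional inequality (or an equivalent), and the equality analysis must then also account for the equality cases of that inequality, not only those of the Cram\'er--Rao step.
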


For the proof of this inequality, we will use two general inequalities
relating the moment $m_{\alpha}\left[g\right]$, the generalized Fisher
information $\phi_{\beta,q}\left[g\right]$ and the information generating
function $M_{q}[g]=\int g(x)^{q}\text{d}x.$ We will also use the
notation $N_{q}[g]=M_{q}[g]^{\frac{1}{1-q}},$which is known as the
``Rényi entropy power''. In the theorem above as well as in the lemma below, $\phi_{\beta,q}\left[G\right]$, 
$m_{\alpha}\left[G\right]$  and $N_{q}[G]$  are the values taken by the Fisher information, the moment of order $\alpha$ and the entropy power when the probability density $g$ is the generalized $q$-Gaussian $G$. 
The exact expressions of these quantities are given in the Appendix A of \cite{bercher__2012}, taking into account the  footnote page \pageref{foot}.
\begin{lem}
For $n\geq1,$ $\alpha\in(0,\infty),$ $q>n/(n+\alpha),$ and if $g$
is a probability density on random vectors of $\mathbb{R}^{n}$ with
\textup{$m_{\alpha}[g]=E[\|x\|^{\alpha}]<\infty,\, N_{q}[g]<\infty$,}
then 
\begin{equation}
\frac{m_{\alpha}[g]^{\frac{1}{\alpha}}}{N_{q}[g]^{\frac{1}{n}}}\geq\frac{m_{\alpha}[G]^{\frac{1}{\alpha}}}{N_{q}[G]^{\frac{1}{n}}},\label{eq:momententropy}
\end{equation}
with equality if and only if $g$ is a generalized Gaussian. \end{lem}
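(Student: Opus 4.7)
\emph{Proof plan.} The ratio $m_\alpha[g]^{1/\alpha}/N_q[g]^{1/n}$ is invariant under the dilation $g(x) \mapsto \lambda^n g(\lambda x)$, since this sends $m_\alpha \mapsto \lambda^{-\alpha} m_\alpha$ and $N_q \mapsto \lambda^{-n} N_q$; the same scale invariance holds for the family $G_\gamma$ as $\gamma$ varies. I would therefore begin by choosing $\gamma$ so that $m_\alpha[G_\gamma] = m_\alpha[g]$, reducing the claim to the R\'enyi entropy power inequality $N_q[g] \leq N_q[G_\gamma]$. Since $N_q = M_q^{1/(1-q)}$, this is equivalent to $M_q[g] \geq M_q[G]$ when $q > 1$ and to $M_q[g] \leq M_q[G]$ when $q < 1$.

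The key tool is the tangent-line bound for $t \mapsto t^q$: by convexity ($q>1$) or concavity ($q<1$),
\begin{equation}
g(x)^q \,\gtreqless\, G(x)^q + q\, G(x)^{q-1}\bigl(g(x)-G(x)\bigr) \qquad (\text{pointwise, on }\{G>0\}),
\end{equation}
with the sign matching that of $q-1$. Integrating reduces the problem to analyzing the cross term $\int G^{q-1}(g-G)\,\mathrm{d}x$. The structure of the $q$-Gaussian makes this tractable: $G^{q-1}$ equals $Z^{1-q}$ times an affine function of $\|x\|^\alpha$, namely $(1-(q-1)\gamma\|x\|^\alpha)_+$ for $q>1$ and $1+(1-q)\gamma\|x\|^\alpha$ for $q<1$.

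For $q<1$, $G>0$ everywhere and the cross term evaluates \emph{exactly}; using $\int g = \int G = 1$ together with the matched $\alpha$-moments it collapses to $Z^{1-q}(1-q)\gamma\bigl(m_\alpha[g]-m_\alpha[G]\bigr) = 0$, giving $M_q[g] \leq M_q[G]$. For $q>1$, $G$ has compact support and $G^{q-1}$ is extended by zero outside; the elementary inequality $y_+ \geq y$ applied to $y = 1-(q-1)\gamma\|x\|^\alpha$ yields $\int G^{q-1} g\,\mathrm{d}x \geq Z^{1-q}\bigl(1-(q-1)\gamma m_\alpha[g]\bigr)$, and moment matching turns the right-hand side into $\int G^{q-1} G\,\mathrm{d}x = M_q[G]$. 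Combined with the tangent-line bound, this gives $M_q[g] \geq M_q[G]$, as required.

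For the equality case, the tangent-line bound is strict unless $g = G$ at every point of $\{G>0\}$, and in the $q>1$ case the positive-part inequality is additionally strict unless $\mathrm{supp}\,g \subseteq \mathrm{supp}\,G$; combined with $\int g = \int G$, these force $g \equiv G_\gamma$. The main obstacle I would anticipate is precisely the $q>1$ case, where the compact support of $G$ and the appearance of a positive part require the slightly delicate bound $y_+ \geq y$ to convert an estimate on $\mathrm{supp}\,G$ into one involving the global moment $m_\alpha[g]$; the $q<1$ case, by contrast, reduces to a direct identity once the tangent-line bound is written down.
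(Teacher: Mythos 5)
Your argument is correct, but it is a genuinely different route from the paper's: the paper does not prove this lemma at all, it simply cites Lutwak--Yang--Zhang (\emph{Moment-entropy inequalities for a random vector}, IEEE Trans.\ Inf.\ Theory 2007) and remarks that their Euclidean-norm proof goes through verbatim for a general norm. Your proof is a self-contained maximum-R\'enyi-entropy argument: scale out to match $\alpha$-moments, apply the tangent-line (convexity/concavity) bound for $t\mapsto t^q$ at $G$, and exploit the fact that $G^{q-1}$ is affine in $\|x\|^\alpha$ so that the cross term $\int G^{q-1}(g-G)$ either vanishes ($q<1$) or is nonnegative ($q>1$, via $y_+\geq y$). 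The computations check out, including the sign flip when passing from $M_q$ to $N_q=M_q^{1/(1-q)}$ and the equality analysis, and your approach has the virtue of making the norm-independence completely transparent, since only the scalar $\|x\|^\alpha$ ever enters --- which is exactly the point the paper asserts without detail. What it buys over the paper's citation is a complete, elementary argument; what it costs is that you should still record three small verifications: (i) surjectivity of $\gamma\mapsto m_\alpha[G_\gamma]$ onto $(0,\infty)$ (immediate from the dilation structure $G_\gamma(x)=\lambda^{n}G_1(\lambda x)$ with $\lambda=\gamma^{1/\alpha}$), so the moment-matching $\gamma$ exists; (ii) finiteness of the cross term for $q<1$, where $G^{q-1}$ grows like $\|x\|^\alpha$, which is exactly covered by the hypothesis $m_\alpha[g]<\infty$; and (iii) that the constraint $q>n/(n+\alpha)$ is what makes $m_\alpha[G]$ and $M_q[G]$ finite so the comparison is not vacuous.
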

\begin{proof}
The inequality (\ref{eq:momententropy}) has been stated and proved
in \cite{lutwak_moment-entropy_2007} in the case of an Euclidean
norm. We simply note here that the proof in \cite{lutwak_moment-entropy_2007}
works as well in the case of a general norm. 
\end{proof}
We will also use a generalized Stam inequality derived from a general
sharp Gagliardo-Nirenberg inequality proved in the remarkable paper
of Cordero et al. \cite{cordero-erausquin_mass-transportation_2004}. 
\begin{lem}
For $n\geq1,$ $\beta$ and $\alpha$ Hölder conjugates of each other,
$\alpha>1,$ and $q>\max\left\{ (n-1)/n,\, n/(n+\alpha)\right\} $,
then for any probability density on $\mathbb{R}^{n}$, supposed continuously
differentiable, the following generalized Stam inequality holds 
\begin{equation}
\phi_{\beta,q}\left[g\right]^{\frac{1}{\beta\lambda}}\, N_{q}[g]^{\frac{1}{n}}\geq\phi_{\beta,q}\left[G\right]^{\frac{1}{\beta\lambda}}\, N_{q}[G]^{\frac{1}{n}},\label{eq:GeneralizedStamInequality}
\end{equation}
with $\lambda=n(q-1)+1$ and with equality if and only if $g$ is
any generalized $q$-Gaussian (\ref{eq:DefQGaussian}).\end{lem}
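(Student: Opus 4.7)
The plan is to derive this generalized Stam inequality from the sharp Gagliardo--Nirenberg inequality proved by Cordero-Erausquin, Nazaret and Villani in \cite{cordero-erausquin_mass-transportation_2004} via mass transportation. That result is particularly well suited to our framework because it is stated for an arbitrary norm on $\mathbb{R}^n$, and its extremal functions are precisely the generalized $q$-Gaussians (\ref{eq:DefQGaussian}); this will immediately deliver both the inequality and its equality case.

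The key step is a monomial substitution $u = g^{s}$ with the exponent $s = q - 1/\alpha$, chosen so that the three quantities entering (\ref{eq:GeneralizedStamInequality}) correspond to ordinary Lebesgue norms of $u$. Using $\nabla u = s\,g^{s-1}\nabla g$ together with the definition of $\phi_{\beta,q}$, a direct computation gives
\[
\int \|\nabla u\|_*^\beta\,\mathrm{d}x = s^\beta\,\phi_{\beta,q}[g],
\]
while setting $r = q/s$ and $r' = 1/s$ yields $\int u^r\,\mathrm{d}x = M_q[g]$ and $\int u^{r'}\,\mathrm{d}x = \int g\,\mathrm{d}x = 1$. Thus the generalized Fisher information, the information generating function, and the probability normalization of $g$ appear in $u$-coordinates as a gradient $L^\beta$-norm and two Lebesgue norms.

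I would then invoke the sharp Gagliardo--Nirenberg inequality of \cite{cordero-erausquin_mass-transportation_2004} relative to the norm $\|.\|_*$, in the form
\[
\|u\|_r \leq C\,\left(\int \|\nabla u\|_*^\beta\,\mathrm{d}x\right)^{\theta/\beta}\|u\|_{r'}^{1-\theta},
\]
with $\theta\in(0,1)$ fixed by the scaling identity $\tfrac{1}{r} = \theta\bigl(\tfrac{1}{\beta}-\tfrac{1}{n}\bigr) + (1-\theta)\tfrac{1}{r'}$. Substituting the three norm identities above, using the relation $N_q[g] = M_q[g]^{1/(1-q)}$, and raising the resulting inequality to the appropriate power, one rearranges it into the form $\phi_{\beta,q}[g]^{1/(\beta\lambda)}\,N_q[g]^{1/n}\geq C'$ with $\lambda = n(q-1)+1$. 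Since the inequality is attained by some admissible density, the sharp constant must equal $\phi_{\beta,q}[G]^{1/(\beta\lambda)}\,N_q[G]^{1/n}$, which yields (\ref{eq:GeneralizedStamInequality}).

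The main obstacle is the algebraic bookkeeping: one must verify that the scaling exponent $\theta$ produces exactly the factor $\beta\lambda$ with $\lambda = n(q-1)+1$, and that the admissibility range of the sharp Gagliardo--Nirenberg inequality in \cite{cordero-erausquin_mass-transportation_2004} translates precisely into the hypothesis $q > \max\{(n-1)/n,\,n/(n+\alpha)\}$. For the equality case, the extremals of the sharp Gagliardo--Nirenberg inequality with respect to a general norm $\|.\|$ are known to be of the form $(1-c\|x\|^\alpha)_+^{1/(q-1)}$; undoing the substitution $u = g^s$ then forces $g$ to be a generalized $q$-Gaussian $G_\gamma$ of the form (\ref{eq:DefQGaussian}), which is the announced characterization.
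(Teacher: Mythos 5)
Your proposal follows essentially the same route as the paper: both derive the generalized Stam inequality from the sharp Gagliardo--Nirenberg inequality of Cordero-Erausquin, Nazaret and Villani via the power substitution $u=g^{s}$, and your exponent $s=q-1/\alpha$ is exactly the paper's $t$ (determined there by $a\beta t=1$ and $q=\left[a(\beta-1)+1\right]t$). The only point the paper makes explicit that you leave implicit is that the computation is carried out for $q<1$, with the case $q>1$ handled by the companion family of sharp Gagliardo--Nirenberg inequalities in the same reference.
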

\begin{proof}
In our notations, the sharp Gagliardo-Nirenberg inequality \cite[Eq. (34) p. 320]{cordero-erausquin_mass-transportation_2004},
with $a>1$ and $\|\nabla u\|_{\beta}=\left(\int\|\nabla u\|_{*}^{\beta}\text{d}x\right)^{\frac{1}{\beta}}$,
is 
\begin{equation}
\frac{\|\nabla u\|_{\beta}\|u\|_{a(\beta-1)+1}^{\frac{1}{\theta}-1}}{\|u\|_{a\beta}^{\frac{1}{\theta}}}\geq K\label{eq:sharpCordero}
\end{equation}
where $K$ is a sharp constant which is attained if and only if $u$
is a generalized Gaussian with exponent $1/(1-a),$ and where $\theta$
is given by $\theta=n(a-1)/a(n\beta-(a\beta+1-a)(n-\beta))$. The
idea is to take $u=g^{t},$ $g$ being a probability density function,
with $a\beta t=1,$ and to note $q=\left[a(\beta-1)+1\right]t$. With
these notations, we get that $\beta t=\beta(q-1)+1,$ and (\ref{eq:sharpCordero})
becomes
\begin{equation}
\phi_{\beta,q}\left[g\right]^{\frac{1}{\beta}}\, N_{q}[g]^{\frac{(\theta-1)(1-q)t}{\theta q}}\geq K.\label{eq:phiNexpo}
\end{equation}
Simplifying the expression of $\theta$ and the exponent in (\ref{eq:phiNexpo}),
we finally obtain, with $q<1$, the generalized Stam inequality (\ref{eq:GeneralizedStamInequality}),
with equality if and only if $g$ is any generalized $q$-Gaussian
(\ref{eq:DefQGaussian}). Actually, this generalized Stam inequality
is also valid in the case $q>1,$ as it can be checked from \cite{cordero-erausquin_mass-transportation_2004}'s
results using similar steps as above. The conditions on $q$ simply
ensure the existence of the information measures for the generalized
Gaussian. 
\end{proof}
We end with the proof of theorem\,\ref{theo2}, which is now an easy
task. 
\begin{proof}
The Cramér-Rao inequality (\ref{eq:ExtendedqCRloc1}) can also be
written
\begin{equation}
m_{\alpha}\left[g\right]^{\frac{1}{\alpha}}\,\frac{\phi_{\beta,q}\left[g\right]^{\frac{1}{\beta}}}{\frac{n}{q}M_{q}[g]}\geq1.\label{eq:ExtendedqCRLoc1bis}
\end{equation}
Eliminating $M_{q}[g]$ between this inequality and the moment-entropy
inequality (\ref{eq:momententropy}) with $q>1,$ we arrive at (\ref{eq:GeneralizedCramerLutwak}).
Similarly, in the case $q<1,$ the elimination of $M_{q}[g]$ between
the extended $q$-Cramér-Rao inequality for a location parameter (\ref{eq:ExtendedqCRLoc1bis})
and the generalized Stam inequality (\ref{eq:GeneralizedStamInequality})
also leads to (\ref{eq:GeneralizedCramerLutwak}). The case of equality
directly follows from the cases of equality in the initial inequalities.
Alternatively, we can observe that (\ref{eq:GeneralizedCramerLutwak})
also follows at once by the combination of (\ref{eq:momententropy})
and (\ref{eq:GeneralizedStamInequality}). 
\end{proof}

\section{\label{sec:Uncertainty}Uncertainty relations associated with the
$q$-Cram\'er-Rao inequalities}
\enlargethispage{0.5cm}
It is known that the Weyl-Heisenberg uncertainty principle in statistical
physics corresponds to the standard Cram\'er-Rao inequality for the
location parameter, see e.g. \cite{stam_inequalities_1959}. Following
this idea, we derive new uncertainty relations from the extended Cram\'er-Rao
inequalities. These uncertainty relations involve escort mean values
and are saturated by generalized Gaussians. These uncertainty relations
thus give a way to measure the uncertainty not only with respect to
the original wave function, but also with respect to an escort-deformed
version of the associated probability density. Of course, the standard
Weyl-Heisenberg uncertainty inequality is recovered in the case $q=1$. 

Let us recall that the uncertainty principle originates in Heisenberg's
work and is stated in Weyl's book \cite[page 77]{weyl_theory_1950},
who credits Pauli. It indicates that if $x$ and $\xi$ are dual Fourier
variables (e.g. position and momentum), then the less the uncertainty
in $x,$ the greater in $\xi$ and conversely: 
\begin{equation}
E[|x|^{2}]E[|\xi|^{2}]\geq\frac{1}{16\pi^{2}},\label{eq:HeisenbergInequality}
\end{equation}
with equality if and only if the probability density of $x$ is a
Gaussian density. Many improvements and variations on this inequality
have been given, see for instance the review \cite{folland_uncertainty_1997}.
Interesting, but not well-known, moment inequalities have been given
in \cite{angulo_uncertainty_1993,angulo_information_1994}. Similar
formulations are obtained below. Other entropic uncertainty relations
have been stated by Hirshman \cite{hirschman_note_1957} and improved
in \cite{bialynicki-birula_formulation_2006,zozor_classes_2007}.
It is worth mentioning that all the uncertainty inequalities are extremely
useful in the analysis of complex systems, see the recent book \cite{sen_statistical_2011},
particularly the chapter \cite{sen_statistical_2011}, and the paper
\cite{zozor_position-momentum_2011}. 

In the following, we consider a complex amplitude wave function $\psi(x)$,
$x\in\mathbb{R}^{n}$, with unit Euclidean norm 2, and denote $\hat{\psi}(\xi)$
its Fourier transform, $\xi\in\mathbb{R}^{n}$. By the Parseval equality
$\int_{\mathbb{R}^n}|\psi(x)|^{2}\mathrm{d}x=\int_{\mathbb{R}^n}|\hat{\psi}(\xi)|^{2}\mathrm{d}\xi=1$
and both $|\psi(x)|^{2}$ and $|\hat{\psi}(\xi)|^{2}$ are probability
density functions. 

We begin by a simple change of function which enables to express the
generalized Fisher information as a Dirichlet energy. This leads to
the following result. 
\begin{prop}
For $k=\beta/\left(\beta(q-1)+1\right),$ $\lambda=n(q-1)+1$, $q>\max\left\{ (n-1)/n,\, n/(n+\alpha)\right\} $,
we have
\begin{align}
 & \frac{1}{M_{\frac{kq}{2}}[|\psi|^{2}]}\left(\int_{\mathbb{R}^n}\left\Vert x\right\Vert ^{\alpha}|\psi(x)|^{k}\mathrm{d}x\right)^{\frac{1}{\alpha}}\left(\int_{\mathbb{R}^n}\left\Vert \nabla_{x}|\psi(x)|\right\Vert _{*}^{\beta}\mathrm{d}x\right)^{\frac{1}{\beta}}\geq\frac{n}{kq}\label{eq:FirstCRIntermsofPsi}\\
 & \frac{1}{M_{\frac{k}{2}}[|\psi|^{2}]^{\frac{1}{k\lambda}}}\,\left(E_{\frac{k}{2}}\left[\left\Vert x\right\Vert ^{\alpha}\right]\right)^{\frac{1}{\alpha}}\left(\int_{\mathbb{R}^n}\left\Vert \nabla_{x}|\psi(x)|\right\Vert _{*}^{\beta}\mathrm{d}x\right)^{\frac{1}{\beta\lambda}}\geq\frac{1}{k^{\frac{1}{\lambda}}}m_{\alpha}\left[G\right]^{\frac{1}{\alpha}}\,\phi_{\beta,q}\left[G\right]^{\frac{1}{\beta\lambda}},\label{eq:SecondCRIntermsofPsi}
\end{align}
where $E_{q}[.]$ denote the $q$-expectation, the expectation computed
with respect to the escort distribution of order $q$, and $M_{q}[.]$
the information generating function of order $q$. In these inequalities,
the case of equality is obtained if and only if $|\psi(x)|^{k}/M_{\frac{k}{2}}[|\psi|^{2}]$
is a generalized $q$-Gaussian. \end{prop}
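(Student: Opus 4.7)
The plan is to reduce this proposition to the Cram\'er-Rao inequalities already established, namely (\ref{eq:ExtendedqCRloc1}) and (\ref{eq:GeneralizedCramerLutwak}), by a judicious change of function. Specifically, I would define
\begin{equation}
g(x)=\frac{|\psi(x)|^{k}}{M_{\frac{k}{2}}[|\psi|^{2}]},
\end{equation}
which is a probability density because $\int|\psi(x)|^{k}\mathrm{d}x=\int(|\psi(x)|^{2})^{k/2}\mathrm{d}x=M_{k/2}[|\psi|^{2}]$. This is the natural candidate: the integration measure $g$ is a power of $|\psi|$, so its gradient brings down a factor $k|\psi|^{k-1}\nabla|\psi|$, which is exactly what is needed to convert $\phi_{\beta,q}[g]$ into an ordinary Dirichlet-type energy of $|\psi|$.

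Next, I would carry out the bookkeeping of exponents. Starting from the definition $\phi_{\beta,q}[g]=\int g(x)^{\beta(q-1)+1-\beta}\|\nabla_{x}g(x)\|_{*}^{\beta}\mathrm{d}x$ and using the defining identity $\beta(q-1)+1=\beta/k$, the power of $|\psi|$ that appears in the integrand is $\beta(1-k)+(k-1)\beta=0$. This collapse is the whole point of the choice of $k$, and yields
\begin{equation}
\phi_{\beta,q}[g]=\frac{k^{\beta}}{M_{\frac{k}{2}}[|\psi|^{2}]^{\beta/k}}\int_{\mathbb{R}^{n}}\|\nabla_{x}|\psi(x)|\|_{*}^{\beta}\mathrm{d}x,
\end{equation}
while the $\alpha$-moment trivially rewrites as $m_{\alpha}[g]^{1/\alpha}=E_{k/2}[\|x\|^{\alpha}]^{1/\alpha}$, and $M_{q}[g]=M_{kq/2}[|\psi|^{2}]\,M_{k/2}[|\psi|^{2}]^{-q}$.

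Plugging into the first Cram\'er-Rao inequality (\ref{eq:ExtendedqCRloc1}), which reads $m_{\alpha}[g]^{1/\alpha}I_{\beta,q}[g]^{1/\beta}\geq n$ with $I_{\beta,q}[g]=(q/M_{q}[g])^{\beta}\phi_{\beta,q}[g]$, the collected powers of $M_{k/2}[|\psi|^{2}]$ are $q-1/k-1/\alpha$; the H\"older relation $1/k=q-1+1/\beta$ yields $q-1/k=1/\alpha$, so this exponent vanishes and exactly the inequality (\ref{eq:FirstCRIntermsofPsi}) pops out with the constant $n/(kq)$. Applying (\ref{eq:GeneralizedCramerLutwak}) to the same $g$, and absorbing the factor $k^{1/\lambda}M_{k/2}[|\psi|^{2}]^{-1/(k\lambda)}$ coming from $\phi_{\beta,q}[g]^{1/(\beta\lambda)}$, gives (\ref{eq:SecondCRIntermsofPsi}) immediately. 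The equality case transfers directly from the corollary of Section \ref{sub:Case-of-both} and from Theorem \ref{theo2}: equality holds iff $g$ is a generalized $q$-Gaussian, i.e.\ iff $|\psi|^{k}/M_{k/2}[|\psi|^{2}]$ is one.

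There is no genuine obstacle here beyond tracking exponents and verifying the two identities $\beta(q-1)+1=\beta/k$ and $q-1/k=1/\alpha$, both of which follow mechanically from $k=\beta/(\beta(q-1)+1)$ and $\alpha^{-1}+\beta^{-1}=1$. The only place where care is required is in checking that the regularity hypotheses on $g$ needed by (\ref{eq:ExtendedqCRloc1}) and (\ref{eq:GeneralizedCramerLutwak}) follow from the finiteness of the information measures appearing on the left-hand side of the proposition.
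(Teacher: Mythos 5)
Your proposal is correct and follows essentially the same route as the paper: the same change of function $|\psi|^{k}/M_{\frac{k}{2}}[|\psi|^{2}]$ (which the paper calls $f$), the same collapse of the $|\psi|$-exponent via $\beta(q-1)+1=\beta/k$, the same identities $M_{q}$ and $m_{\alpha}$ in terms of $|\psi|^{2}$, the cancellation of the $M_{\frac{k}{2}}$ powers through $\frac{1}{\alpha}+\frac{1}{k}=q$, and the transfer of the equality cases from the two Cram\'er--Rao inequalities. No substantive difference to report.
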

\begin{proof}
Let 
\begin{equation}
f(x)=\frac{\left(|\psi(x)|^{2}\right)^{\frac{k}{2}}}{\int_{\mathbb{R}^n}\left(|\psi(x)|^{2}\right)^{\frac{k}{2}}\mathrm{d}x}=\frac{|\psi(x)|^{k}}{M_{\frac{k}{2}}[|\psi|^{2}]},\label{eq:EscortPsitof}
\end{equation}
with $k=\beta/\left(\beta(q-1)+1\right).$ Note that we also have
the conjugation relation $\frac{1}{\alpha}+\frac{1}{k}=q.$ The change
of function (\ref{eq:EscortPsitof}) is chosen in order to reduce
$\phi_{\beta,q}\left[f\right]$ to the simple form (\ref{eq:ChangeFunctionPhi})
given below. In the other hand, the moment of order $\alpha$ with
respect to $f(x)$ is nothing but the generalized $\frac{k}{2}$-escort
moment (\ref{eq:ChangeFunctionM}) of order $\alpha$ computed with
respect to $|\psi(x)|^{2}$. Finally, the information generating function
$M_{q}[f]$ can also be expressed in term of $|\psi(x)|^{2}$: 
\begin{alignat}{1}
 & \phi_{\beta,q}\left[f\right]=\int_{\mathbb{R}^n}f(x)^{\beta(q-1)+1}\left\Vert \nabla_{x}\ln f(x)\right\Vert _{*}^{\beta}\mathrm{d}x=\frac{|k|^{\beta}}{M_{\frac{k}{2}}[|\psi|^{2}]^{\frac{\beta}{k}}}\int_{\mathbb{R}^n}\left\Vert \nabla_{x}|\psi(x,t)|\right\Vert _{*}^{\beta}\mathrm{d}x.\label{eq:ChangeFunctionPhi}\\
 & m_{\alpha}\left[f\right]=\int_{\mathbb{R}^n}\left\Vert x\right\Vert ^{\alpha}\frac{\left(|\psi(x,t)|^{2}\right)^{\frac{k}{2}}}{\int_{\mathbb{R}^n}\left(|\psi(x,t)|^{2}\right)^{\frac{k}{2}}\mathrm{d}x}\mathrm{d}x=E_{\frac{k}{2}}\left[\left\Vert x\right\Vert ^{\alpha}\right].\label{eq:ChangeFunctionM}\\
 & M_{q}[f]=\frac{M_{\frac{kq}{2}}[|\psi|^{2}]}{M_{\frac{k}{2}}[|\psi|^{2}]^{q}}.
\end{alignat}
 With these notations, the relation between $\phi_{\beta,q}\left[f\right]$
and $I_{\beta,q}\left[f\right]$, i.e. $I_{\beta,q}\left[f\right]^{\frac{1}{\beta}}=\frac{\phi_{\beta,q}\left[f\right]^{\frac{1}{\beta}}}{\frac{1}{q}M_{q}[f]}$,
and the conjugation relation $\frac{1}{\alpha}+\frac{1}{k}=q$, the
generalized Cram\'er-Rao (\ref{eq:ExtendedqCRloc1}) becomes (\ref{eq:FirstCRIntermsofPsi}).
Similarly, the second Cram\'er-Rao inequality (\ref{eq:GeneralizedCramerLutwak}) gives (\ref{eq:SecondCRIntermsofPsi}).
\end{proof}

Beginning with the relations (\ref{eq:FirstCRIntermsofPsi}), (\ref{eq:SecondCRIntermsofPsi})
and using a relation between the norm of a function and the norm of
its Fourier transform, we obtain a pair of general uncertainty relations,
for any exponent, and that involve the expectations computed with
respect to escort distributions. This result is stated as follows:
\begin{thm}
For $2\geq\alpha\geq1$, $\beta$ its H\"older conjugate, $q>\max\left\{ (n-1)/n,\, n/(n+\alpha)\right\} ,$
$k=\beta/\left(\beta(q-1)+1\right),$ and $\lambda=n(q-1)+1$, then
we have
\begin{alignat}{1}
 & \frac{M_{\frac{\alpha}{2}}[|\hat{\psi}|^{2}]^{\frac{1}{\alpha}}M_{\frac{k}{2}}[|\psi|^{2}]^{\frac{1}{\alpha}}}{M_{\frac{kq}{2}}[|\psi|^{2}]}\,\, E_{\frac{k}{2}}\left[\left\Vert x\right\Vert _{\alpha}^{\alpha}\right]^{\frac{1}{\alpha}}E_{\frac{\alpha}{2}}\left[\left\Vert \xi\right\Vert _{\beta}^{\beta}\right]^{\frac{1}{\beta}}\geq\frac{n}{2\pi kq}\left(\frac{\beta^{\frac{1}{\beta}}}{\alpha^{\frac{1}{\alpha}}}\right)^{-\frac{n}{2\beta}}.\label{eq:FirstCRuncertainty3}\\
 & \frac{M_{\frac{\alpha}{2}}[|\hat{\psi}|^{2}]^{\frac{1}{\alpha\lambda}}}{M_{\frac{k}{2}}[|\psi|^{2}]^{\frac{1}{k\lambda}}}\,\left(E_{\frac{k}{2}}\left[\left\Vert x\right\Vert _{\alpha}^{\alpha}\right]\right)^{\frac{1}{\alpha}}\left(E_{\frac{\alpha}{2}}\left[\left\Vert \xi\right\Vert _{\beta}^{\beta}\right]\right)^{\mbox{\ensuremath{\frac{1}{\beta\lambda}}}}\geq\frac{1}{(2\pi k)^{\frac{1}{\lambda}}}\left(\frac{\beta^{\frac{1}{\beta}}}{\alpha^{\frac{1}{\alpha}}}\right)^{-\frac{n}{2\beta\lambda}}m_{\alpha}\left[G\right]^{\frac{1}{\alpha}}\,\phi_{\beta,q}\left[G\right]^{\frac{1}{\beta\lambda}}\label{eq:SecondCRuncertainty3}
\end{alignat}
\end{thm}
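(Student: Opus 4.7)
The plan is to take the two Cram\'er-Rao inequalities (\ref{eq:FirstCRIntermsofPsi}) and (\ref{eq:SecondCRIntermsofPsi}) already established in the preceding Proposition, specialize the general norm to $\|\cdot\|=\|\cdot\|_\alpha$ (so that $\|\cdot\|_*=\|\cdot\|_\beta$), and then use a Hausdorff--Young-type inequality to upper bound the Fisher information term $\int \|\nabla_x|\psi(x)|\|_*^{\beta}\,\mathrm{d}x$ by a moment of $\hat\psi$ in the Fourier variable $\xi$. Once the Fisher term is replaced by such a moment, a simple rearrangement of the $M_q[\cdot]$ factors and of the escort expectations yields the two uncertainty inequalities.

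Concretely, I would first write $\int \|\nabla_x |\psi|\|_\beta^\beta\,\mathrm{d}x = \sum_{i=1}^n \int |\partial_{x_i}|\psi||^\beta\,\mathrm{d}x$ and use the diamagnetic-type pointwise bound $|\partial_{x_i}|\psi||\le |\partial_{x_i}\psi|$ to reduce matters to $\sum_i\|\partial_{x_i}\psi\|_\beta^\beta$. Since the Fourier transform of $\partial_{x_i}\psi$ is $2\pi i\,\xi_i\hat\psi(\xi)$, the sharp Hausdorff--Young (Babenko--Beckner) inequality, valid for $1\le\alpha\le 2$ and its H\"older conjugate $\beta\ge 2$, gives componentwise
\begin{equation*}
\|\partial_{x_i}\psi\|_\beta \le 2\pi\, B_\alpha^{\,n}\, \|\xi_i\hat\psi\|_\alpha,\qquad B_\alpha=\bigl(\alpha^{1/\alpha}/\beta^{1/\beta}\bigr)^{1/2}.
\end{equation*}
Raising to the power $\beta$, summing over $i$, and converting the resulting expression into a single integral with weight $|\hat\psi|^\alpha$ (via the elementary inequality $\sum_i a_i^{\beta/\alpha}\le(\sum_i a_i)^{\beta/\alpha}$ valid for $\beta/\alpha\ge 1$, together with the $\ell^\alpha$--$\ell^\beta$ relations among vector norms) produces an upper bound of the form
\begin{equation*}
\Bigl(\int \|\nabla_x|\psi|\|_\beta^\beta\,\mathrm{d}x\Bigr)^{1/\beta} \le C_{\alpha,\beta,n}\; M_{\alpha/2}[|\hat\psi|^2]^{1/\alpha}\; E_{\alpha/2}\bigl[\|\xi\|_\beta^\beta\bigr]^{1/\beta},
\end{equation*}
where the constant $C_{\alpha,\beta,n}$ is an appropriate power of $2\pi$ and of Babenko--Beckner's $B_\alpha^n$, matching the factor $(\beta^{1/\beta}/\alpha^{1/\alpha})^{-n/(2\beta)}$ that appears in the theorem.

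Substituting this bound into (\ref{eq:FirstCRIntermsofPsi}) and using the identity $\int \|x\|_\alpha^\alpha |\psi|^k\,\mathrm{d}x = M_{k/2}[|\psi|^2]\,E_{k/2}[\|x\|_\alpha^\alpha]$ immediately produces (\ref{eq:FirstCRuncertainty3}); the same substitution into (\ref{eq:SecondCRIntermsofPsi}), together with the fact that the right-hand side of (\ref{eq:SecondCRIntermsofPsi}) is already in escort form, gives (\ref{eq:SecondCRuncertainty3}) after raising the Fisher-information factor to the power $1/(\beta\lambda)$ and redistributing the constants. The case of equality is inherited from the two ingredients: the Cram\'er-Rao inequality is sharp when $|\psi|^k/M_{k/2}[|\psi|^2]$ is a generalized $q$-Gaussian, and Babenko--Beckner is saturated by (ordinary) Gaussians, the two saturation conditions being compatible.

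The main obstacle is bookkeeping: one has to keep track simultaneously of (i) the diamagnetic reduction from $|\nabla|\psi||$ to $|\nabla\psi|$, (ii) the componentwise application of the Hausdorff--Young inequality and the passage from $\sum_i \|\xi_i\hat\psi\|_\alpha^\beta$ to an integral against $\|\xi\|_\beta^\beta |\hat\psi|^\alpha$, (iii) the rewriting of integrals as products of information generating functions $M_q[\cdot]$ and escort expectations $E_q[\cdot]$, and (iv) the correct power of Babenko--Beckner's constant after all these manipulations. The Fourier input is used only in one direction (bounding the physical-space Fisher information by a frequency-space moment), exactly as in the classical derivation of the Weyl--Heisenberg inequality from the standard Cram\'er-Rao bound, which is recovered in the limit $\alpha=\beta=2$, $q=1$ where $B_2=1$ and both escorts become the genuine probability density.
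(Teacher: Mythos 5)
Your overall route is the paper's: specialize the norm pair to $\|\cdot\|_\alpha$/$\|\cdot\|_\beta$, use the pointwise bound $|\partial_{x_i}|\psi||\le|\partial_{x_i}\psi|$, apply the sharp Hausdorff--Young (Babenko--Beckner) inequality coordinatewise to bound $\sum_i\int|\partial_{x_i}\psi|^\beta\mathrm{d}x$ by $(2\pi)^\beta(\beta^{1/\beta}/\alpha^{1/\alpha})^{n/2}\sum_i\bigl(\int|\xi_i\hat\psi|^\alpha\mathrm{d}\xi\bigr)^{\beta/\alpha}$, and substitute into (\ref{eq:FirstCRIntermsofPsi}) and (\ref{eq:SecondCRIntermsofPsi}). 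However, your step (ii) — passing from $\sum_i\bigl(\int|\xi_i|^\alpha|\hat\psi|^\alpha\bigr)^{\beta/\alpha}$ to a moment of $\|\xi\|_\beta^\beta$ via the elementary inequality $\sum_i a_i^{\beta/\alpha}\le(\sum_i a_i)^{\beta/\alpha}$ together with $\ell^\alpha$--$\ell^\beta$ norm comparisons — does not deliver the stated constant. That route produces $\bigl(\int\|\xi\|_\alpha^\alpha|\hat\psi|^\alpha\bigr)^{\beta/\alpha}$, i.e.\ an $\ell^\alpha$-moment; converting $E_{\frac{\alpha}{2}}[\|\xi\|_\alpha^\alpha]^{1/\alpha}$ into $E_{\frac{\alpha}{2}}[\|\xi\|_\beta^\beta]^{1/\beta}$ requires $\|\xi\|_\alpha\le n^{1/\alpha-1/\beta}\|\xi\|_\beta$ followed by Jensen, which costs a dimensional factor $n^{1/\alpha-1/\beta}\ge1$ and hence proves only a weakened version of (\ref{eq:FirstCRuncertainty3}). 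The paper instead applies H\"older \emph{inside each coordinate integral}, $\int|\xi_i|^\alpha|\hat\psi|^\alpha\mathrm{d}\xi\le\bigl(\int|\xi_i|^\beta|\hat\psi|^\alpha\mathrm{d}\xi\bigr)^{\alpha/\beta}\bigl(\int|\hat\psi|^\alpha\mathrm{d}\xi\bigr)^{1-\alpha/\beta}$; raising to the power $\beta/\alpha$ makes the exponents cancel exactly, the sum over $i$ reassembles into $\int\|\xi\|_\beta^\beta|\hat\psi|^\alpha\mathrm{d}\xi$, and the leftover factor $M_{\frac{\alpha}{2}}[|\hat\psi|^2]^{\frac{1}{\alpha}-\frac{1}{\beta}}$ is precisely what turns the escort normalization $M_{\frac{\alpha}{2}}^{1/\beta}$ into the $M_{\frac{\alpha}{2}}^{1/\alpha}$ appearing in (\ref{eq:FirstCRuncertainty3}), with no loss in the constant.

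Your closing remark on equality is also incorrect: the saturation conditions of the Cram\'er--Rao step (generalized $q$-Gaussian), of Babenko--Beckner (ordinary Gaussian, unless $\alpha=\beta=2$), and of the coordinatewise H\"older step are \emph{not} simultaneously attainable in general, and the paper explicitly states that (\ref{eq:FirstCRuncertainty3}) and (\ref{eq:SecondCRuncertainty3}) are not sharp for that reason; sharpness is recovered only in the case $\alpha=\beta=2$, where Babenko--Beckner degenerates to Parseval and the H\"older step becomes an identity. Since the theorem as stated claims no equality case, this does not invalidate the inequalities themselves, but the claim of compatible saturation should be dropped.
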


\begin{proof}
We first observe that for any complex valued function, we have 
\begin{equation}
\left(\frac{\partial|\psi(x)|}{\partial x_{i}}\right)^{2}=\frac{\partial\psi(x)}{\partial x_{i}}\frac{\partial\psi(x)^{*}}{\partial x_{i}}-|\psi(x)|^{2}\left(\frac{\partial\arg\psi(x)}{\partial x_{i}}\right)^{2}.\label{eq:dmodPsi}
\end{equation}
This relation implies that 
\begin{equation}
\left(\frac{\partial|\psi(x)|}{\partial x_{i}}\right)\leq\left|\frac{\partial\psi(x)}{\partial x_{i}}\right|,\label{eq:Inequality_for_dmodPsi}
\end{equation}
with equality if and only if $\arg\psi(x)=c,$ where $c$ is any constant.
Therefore, we see that we always have $\left\Vert \nabla_{x}|\psi(x,t)|\right\Vert _{*}\leq\left\Vert |\nabla_{x}\psi(x,t)|\right\Vert _{*}.$
In the following, we will take for $\left\Vert .\right\Vert _{*}$
a $\beta$-norm and for $\left\Vert .\right\Vert $ its dual $\alpha$-norm.
So doing, we get $\left\Vert |\nabla_{x}\psi(x,t)|\right\Vert _{\beta}^{\beta}=\sum_{i=1}^{n}|\partial_{i}\psi(x)|^{\beta}$.

At this point, we can invoke the sharp version of the Hausdorff-Young
inequality due to Babenko and Beckner \cite{beckner_inequalities_1975},
which states that for a pair of Fourier transforms $g$ and $\hat{g,}$
then for $1\leq\alpha\leq2$ 
\begin{equation}
\left\Vert g\right\Vert _{\beta}\leq\left(\frac{\beta^{\frac{1}{\beta}}}{\alpha^{\frac{1}{\alpha}}}\right)^{\frac{n}{2}}\,\left\Vert \hat{g}\right\Vert _{\alpha},\label{eq:BabenkoBecknerInequality}
\end{equation}
with equality if and only if $g$ is a Gaussian function or if $\alpha=\beta=2$
(Parseval's identity). Thus, it comes 
\begin{equation}
\int_{\mathbb{R}^n}\left\Vert \nabla_{x}|\psi(x,t)|\right\Vert _{\beta}^{\beta}\mathrm{d}x=\sum_{i=1}^{n}\int_{\mathbb{R}^n}|\partial_{i}\psi(x)|^{\beta}\mathrm{d}x\leq\left(\frac{\beta^{\frac{1}{\beta}}}{\alpha^{\frac{1}{\alpha}}}\right)^{\frac{n}{2}}(2\pi)^{\beta}\sum_{i=1}^{n}\left(\int_{\mathbb{R}^n}|\xi_{i}\hat{\psi}(\xi)|^{\alpha}\mathrm{d}\xi\right)^{\frac{\beta}{\alpha}}\label{eq:IneqOnGradientBeckner}
\end{equation}
 Inserting this inequality in (\ref{eq:FirstCRIntermsofPsi}), and
taking account of the fact that $\left\Vert x\right\Vert ^{\alpha}=\left\Vert x\right\Vert _{\alpha}^{\alpha}=\sum_{i=1}^{n}\left|x_{i}\right|^{\alpha}$,
we obtain
\begin{equation}
\frac{1}{M_{\frac{kq}{2}}[|\psi|^{2}]}\left(\sum_{i=1}^{n}\int_{\mathbb{R}^n}\left|x_{i}\right|^{\alpha}|\psi(x,t)|^{k}\mathrm{d}x\right)^{\frac{1}{\alpha}}\left(\sum_{i=1}^{n}\left(\int_{\mathbb{R}^n}|\xi_{i}|^{\alpha}|\hat{\psi}(\xi)|^{\alpha}\mathrm{d}\xi\right)^{\frac{\beta}{\alpha}}\right)^{\frac{1}{\beta}}\geq\frac{n}{2\pi kq}\left(\frac{\beta^{\frac{1}{\beta}}}{\alpha^{\frac{1}{\alpha}}}\right)^{-\frac{n}{2\beta}}\label{eq:FirstCRuncertainty1}
\end{equation}
Now, we can observe that by the H\"older inequality 
\begin{equation}
\left(\int_{\mathbb{R}^n}|\xi_{i}|^{\alpha}|\hat{\psi}(\xi)|^{\alpha}\mathrm{d}\xi\right)\leq\left(\int_{\mathbb{R}^n}|\xi_{i}|^{\beta}|\hat{\psi}(\xi)|^{\alpha}\mathrm{d}\xi\right)^{\frac{\alpha}{\beta}}\,\,\left(\int_{\mathbb{R}^n}|\hat{\psi}(\xi)|^{\alpha}\mathrm{d}\xi\right)^{1-\frac{\alpha}{\beta}}.\label{eq:BytheHolderinequality}
\end{equation}
Plugging in this inequality in (\ref{eq:FirstCRuncertainty1}), we
see that the exponent $\beta/\alpha$ simplifies, and that the inequality
can be written in a weaker but more symmetric form: 
\begin{equation}
\frac{M_{\frac{\alpha}{2}}[|\hat{\psi}|^{2}]^{\frac{1}{\alpha}-\frac{1}{\beta}}}{M_{\frac{kq}{2}}[|\psi|^{2}]}\left(\int_{\mathbb{R}^n}\left\Vert x\right\Vert _{\alpha}^{\alpha}|\psi(x,t)|^{k}\mathrm{d}x\right)^{\frac{1}{\alpha}}\left(\int_{\mathbb{R}^n}\left\Vert \xi\right\Vert _{\beta}^{\beta}|\hat{\psi}(\xi)|^{\alpha}\mathrm{d}\xi\,\,\right)^{\frac{1}{\beta}}\geq\frac{n}{2\pi kq}\left(\frac{\beta^{\frac{1}{\beta}}}{\alpha^{\frac{1}{\alpha}}}\right)^{-\frac{n}{2\beta}},\label{eq:FirstCRuncertainty2}
\end{equation}
which can also be written in terms of escort expectations as (\ref{eq:FirstCRuncertainty3}). 

For the second Cram\'er-Rao inequality, we follow the very same steps,
beginning with (\ref{eq:SecondCRIntermsofPsi}). By (\ref{eq:IneqOnGradientBeckner})
it comes 
\begin{equation}
\frac{1}{M_{\frac{k}{2}}[|\psi|^{2}]^{\frac{1}{k\lambda}}}\,\left(E_{\frac{k}{2}}\left[\left\Vert x\right\Vert _{\alpha}^{\alpha}\right]\right)^{\frac{1}{\alpha}}\left(\sum_{i=1}^{n}\left(\int_{\mathbb{R}^n}|\xi_{i}\hat{\psi}(\xi)|^{\alpha}\mathrm{d}\xi\right)^{\frac{\beta}{\alpha}}\right)^{\frac{1}{\beta\lambda}}\geq\frac{1}{\left(2\pi k\right)^{\frac{1}{\lambda}}}m_{\alpha}\left[G\right]^{\frac{1}{\alpha}}\,\phi_{\beta,q}\left[G\right]^{\frac{1}{\beta\lambda}}\left(\frac{\beta^{\frac{1}{\beta}}}{\alpha^{\frac{1}{\alpha}}}\right)^{-\frac{n}{2\beta\lambda}},
\end{equation}
which, by the H\"older inequality (\ref{eq:BytheHolderinequality}),
gives  (\ref{eq:SecondCRuncertainty3}). 
\end{proof}
In the general case, the inequalities (\ref{eq:FirstCRuncertainty3})
and (\ref{eq:SecondCRuncertainty3}) are not sharp, because they follow
from the Babenko-Beckner inequality (\ref{eq:BabenkoBecknerInequality})
and the H\"older inequality (\ref{eq:BytheHolderinequality}), where
the conditions for equality are not met simultaneously. However, we
can still get a sharp uncertainty relation, saturated by generalized
$q$-Gaussians, in the case $\alpha=\beta=2.$ Of course, for $q=1$ (which, with $\beta=2$ gives $k=2$), 
we obtain a multidimensional version of Heisenberg inequality, which
reduces to (\ref{eq:HeisenbergInequality}) in the scalar case. 
\begin{cor}
For $k=\beta/\left(\beta(q-1)+1\right),$ $\lambda=n(q-1)+1$, $q>\max\left\{ (n-1)/n,\, n/(n+\alpha)\right\} $
and $\gamma\geq2,$ $\theta\geq2$, the following uncertainty relations
hold:
\begin{alignat}{1}
 & \,\frac{M_{\frac{k}{2}}[|\psi|^{2}]^{\frac{1}{2}}}{M_{\frac{kq}{2}}[|\psi|^{2}]}\,\, E_{\frac{k}{2}}\left[\left\Vert x\right\Vert _{2}^{\gamma}\right]^{\frac{1}{\gamma}}E\left[\left\Vert \xi\right\Vert _{2}^{\theta}\right]^{\frac{1}{\theta}}\geq\frac{n}{2\pi kq},\label{eq:FirstInequalityForGeneralEuclideanMoments}\\
 & \frac{1}{M_{\frac{k}{2}}[|\psi|^{2}]^{\frac{1}{k\lambda}}}\,\left(E_{\frac{k}{2}}\left[\left\Vert x\right\Vert _{2}^{\gamma}\right]\right)^{\frac{1}{\gamma}}\left(E\left[\left\Vert \xi\right\Vert _{2}^{\theta}\right]\right)^{\mbox{\ensuremath{\frac{1}{\theta\lambda}}}}\geq\frac{1}{(2\pi k)^{\frac{1}{\lambda}}}m_{2}\left[G\right]^{\frac{1}{2}}\,\phi_{2,q}\left[G\right]^{\frac{1}{2\lambda}}.\label{eq:SecondInequalityForGeneralEuclideanMoments}
\end{alignat}
For $\gamma=\theta=2,$ the lower bound is attained if and only $|\psi(x)|^{k}/M_{\frac{k}{2}}[|\psi|^{2}]$
is a generalized $q$-Gaussian distribution, with $\arg\psi(x)=c$,
where $c$ a real constant. \textup{For $\frac{3}{2}-\frac{1}{\beta}>q,$
we also have 
\begin{equation}
\left(E_{\frac{k}{2}}\left[\left\Vert x\right\Vert _{2}^{\gamma}\right]\right)^{\frac{1}{\gamma}}\left(E\left[\left\Vert \xi\right\Vert _{2}^{\theta}\right]\right)^{\mbox{\ensuremath{\frac{1}{\theta\lambda}}}}\geq\frac{1}{(2\pi k)^{\frac{1}{\lambda}}}m_{2}\left[G\right]^{\frac{1}{2}}\,\phi_{2,q}\left[G\right]^{\frac{1}{2\lambda}}.\label{eq:WeakSecondInequality}
\end{equation}
}\end{cor}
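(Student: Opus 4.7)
The plan is to specialize the two uncertainty relations (\ref{eq:FirstCRuncertainty3}) and (\ref{eq:SecondCRuncertainty3}) of the preceding theorem to the Euclidean case $\alpha=\beta=2$, extend the moment orders from $\gamma=\theta=2$ to arbitrary $\gamma,\theta\geq 2$ by Lyapunov monotonicity, and finally identify the equality cases. The first step is immediate: at $\alpha=\beta=2$ the Babenko--Beckner factor $\bigl(\beta^{1/\beta}/\alpha^{1/\alpha}\bigr)^{-n/(2\beta\lambda)}$ equals $1$, the H\"older step (\ref{eq:BytheHolderinequality}) has exponent $\beta/\alpha=1$ and is therefore an identity, and $M_{\alpha/2}[|\hat\psi|^2]=M_1[|\hat\psi|^2]=\int|\hat\psi|^2\mathrm{d}\xi=1$ by Parseval. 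Feeding these three observations into (\ref{eq:FirstCRuncertainty3}) and (\ref{eq:SecondCRuncertainty3}) directly produces the $\gamma=\theta=2$ instances of (\ref{eq:FirstInequalityForGeneralEuclideanMoments}) and (\ref{eq:SecondInequalityForGeneralEuclideanMoments}) with the claimed constants.

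To reach arbitrary $\gamma\geq 2$ and $\theta\geq 2$, I would apply Lyapunov's inequality to the probability densities $|\psi|^k/M_{k/2}[|\psi|^2]$ and $|\hat\psi|^2$: for any probability measure and $2\leq r$ one has $E[|X|^2]^{1/2}\leq E[|X|^r]^{1/r}$, hence $E_{k/2}[\|x\|_2^2]^{1/2}\leq E_{k/2}[\|x\|_2^\gamma]^{1/\gamma}$ and $E[\|\xi\|_2^2]^{1/2}\leq E[\|\xi\|_2^\theta]^{1/\theta}$. Since these moments enter the left-hand sides of (\ref{eq:FirstInequalityForGeneralEuclideanMoments}) and (\ref{eq:SecondInequalityForGeneralEuclideanMoments}) with positive exponents, replacing the order-$2$ moments by higher-order ones only enlarges the left-hand sides, and both inequalities automatically propagate from $\gamma=\theta=2$ to all $\gamma,\theta\geq 2$ with the same right-hand sides.

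For the equality clause at $\gamma=\theta=2$, I would trace equality through every ingredient used. Saturation of the underlying Cram\'er--Rao inequality, either (\ref{eq:ExtendedqCRloc1}) or Theorem~\ref{theo2}, applied to the escort $f=|\psi|^k/M_{k/2}[|\psi|^2]$ forces $f$ to be a generalized $q$-Gaussian, with uniqueness coming from the strict convexity of the dual Euclidean norm. Equality in the pointwise estimate $\|\nabla_x|\psi|\|_*\leq\||\nabla_x\psi|\|_*$ read off from (\ref{eq:dmodPsi}) forces $\partial_i\arg\psi\equiv 0$ for every $i$, hence $\arg\psi\equiv c$. Parseval and the collapsed H\"older step are identities, so the conjunction of these two conditions is both necessary and sufficient.

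The final statement (\ref{eq:WeakSecondInequality}) is obtained from (\ref{eq:SecondInequalityForGeneralEuclideanMoments}) by dropping the prefactor $M_{k/2}[|\psi|^2]^{-1/(k\lambda)}$, which is legitimate as soon as $M_{k/2}[|\psi|^2]^{1/(k\lambda)}\geq 1$. The hypothesis $\tfrac{3}{2}-\tfrac{1}{\beta}>q$ rewrites as $\tfrac{1}{k}=q-\tfrac{1}{\alpha}<\tfrac{1}{2}$, i.e.\ $k>2$, which is the regime where that lower bound can be secured and the escort normalization factor safely discarded. Making this last discard fully airtight is the main technical subtlety I expect in the proof; once performed, (\ref{eq:WeakSecondInequality}) is immediate.
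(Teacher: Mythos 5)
Your proposal reproduces the paper's proof essentially verbatim: at $\alpha=\beta=2$ the Babenko--Beckner inequality collapses to Parseval's identity and the H\"older step (\ref{eq:BytheHolderinequality}) becomes an equality, so (\ref{eq:FirstCRuncertainty3}) and (\ref{eq:SecondCRuncertainty3}) specialize directly to the $\gamma=\theta=2$ cases; the passage to $\gamma,\theta\geq2$ is by exactly the Jensen/Lyapunov monotonicity $E[|X|^{b}]^{a/b}\geq E[|X|^{a}]$ for $b\geq a$ that the paper invokes; and the equality discussion (saturation of the underlying Cram\'er--Rao inequality forcing $|\psi|^{k}/M_{\frac{k}{2}}[|\psi|^{2}]$ to be a generalized $q$-Gaussian, together with equality in (\ref{eq:Inequality_for_dmodPsi}) forcing $\arg\psi=c$) is the paper's. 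All of these steps are correct as you present them.

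The only point where you stop short is the one you flag yourself: establishing $M_{\frac{k}{2}}[|\psi|^{2}]\geq1$ when $k>2$, so that the prefactor $M_{\frac{k}{2}}[|\psi|^{2}]^{-\frac{1}{k\lambda}}\leq1$ may be discarded to pass from (\ref{eq:SecondInequalityForGeneralEuclideanMoments}) to (\ref{eq:WeakSecondInequality}). The paper disposes of this in one line by invoking ``the general power mean inequality'' $M_{a}[f]^{\frac{1}{a}}\geq M_{b}[f]^{\frac{1}{b}}$ for $a\geq b$, whence $M_{\frac{k}{2}}>M_{1}=1$. Your instinct that this is the delicate step is sound: that monotonicity is the power-mean inequality for a \emph{probability} reference measure, whereas $M_{\frac{k}{2}}[|\psi|^{2}]=\int(|\psi|^{2})^{k/2}\,\mathrm{d}x$ is an unnormalized Lebesgue integral, and it can be smaller than one even for $k>2$ (take $|\psi|^{2}$ constant on a set of Lebesgue measure greater than one). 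So the discard is not automatic from $k>2$ alone; a fully airtight argument needs either an additional hypothesis on $\psi$ (for instance nonpositive differential Shannon entropy of $|\psi|^{2}$, which does give $M_{\frac{k}{2}}\geq1$ by monotonicity of the R\'enyi entropy in its order) or must retain the normalization factor. In short: you match the paper everywhere, and the single gap you leave open coincides with the one step of the paper's own proof that is itself the least solid.
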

\begin{proof}
Equality in (\ref{eq:Inequality_for_dmodPsi}) implies that $\arg\psi(x)=c$,
where $c$ a real constant. In the case $\alpha=\beta=2$, the Babenko-Beckner
inequality reduces to Parseval's equality, and the H\"older inequality
(\ref{eq:BytheHolderinequality}) is an identity. Therefore, we directly
obtain 
\begin{equation}
\frac{1}{M_{\frac{kq}{2}}[|\psi|^{2}]}\left(\int_{\mathbb{R}^n}\left\Vert x\right\Vert _{2}^{2}|\psi(x,t)|^{k}\mathrm{d}x\right)^{\frac{1}{2}}\left(\int_{\mathbb{R}^n}\left\Vert \xi\right\Vert _{2}^{2}|\hat{\psi}(\xi)|^{2}\mathrm{d}\xi\,\,\right)^{\frac{1}{2}}\geq\frac{n}{2\pi kq},
\end{equation}
which can also be written in terms of escort expectations as 
\begin{equation}
\,\frac{M_{\frac{k}{2}}[|\psi|^{2}]^{\frac{1}{2}}}{M_{\frac{kq}{2}}[|\psi|^{2}]}\,\, E_{\frac{k}{2}}\left[\left\Vert x\right\Vert _{2}^{2}\right]^{\frac{1}{2}}E\left[\left\Vert \xi\right\Vert _{2}^{2}\right]^{\frac{1}{2}}\geq\frac{n}{2\pi kq}.
\end{equation}
Similarly, the inequality (\ref{eq:SecondCRuncertainty3}) gives 
\begin{equation}
\frac{1}{M_{\frac{k}{2}}[|\psi|^{2}]^{\frac{1}{k\lambda}}}\,\left(E_{\frac{k}{2}}\left[\left\Vert x\right\Vert _{2}^{2}\right]\right)^{\frac{1}{2}}\left(E\left[\left\Vert \xi\right\Vert _{2}^{2}\right]\right)^{\mbox{\ensuremath{\frac{1}{2\lambda}}}}\geq\frac{1}{(2\pi k)^{\frac{1}{\lambda}}}m_{2}\left[G\right]^{\frac{1}{2}}\,\phi_{2,q}\left[G\right]^{\frac{1}{2\lambda}}
\end{equation}
In both inequalities, the lower bound is attained if $f(x)=|\psi(x)|^{k}/M_{\frac{k}{2}}[|\psi|^{2}]$
is a generalized $q$-Gaussian (which means that $|\psi(x)|^{2}$
is also a generalized $q$-Gaussian with a different entropic index:
$q'=1+k(q-1)/2$). 

By Jensen's inequality, we know that for $b\geq a$, we always have
$E[|X|^{b}]^{\frac{a}{b}}\geq E[|X|^{a}]$. Therefore, applying this
inequality with $\gamma\geq2,$ $\theta\geq2$, we obtain (\ref{eq:FirstInequalityForGeneralEuclideanMoments})
and (\ref{eq:SecondInequalityForGeneralEuclideanMoments}). Finally,
by the general power mean inequality, we know that $M_{a}[f]^{\frac{{1}}{a}}\geq M_{b}[f]^{\frac{1}{b}}$
for $a\geq b.$ Therefore, when $k>2,$ that is $\frac{3}{2}-\frac{1}{\beta}>q,$
then $M_{\frac{k}{2}}>M_{1}=1$, and the inequality (\ref{eq:SecondInequalityForGeneralEuclideanMoments})
yields (\ref{eq:WeakSecondInequality}). 
\end{proof}

\section{Conclusions}

This paper improves and build on our previous findings presented
in \cite{bercher_generalized_2012}. We connect concepts in estimation
theory to tools used in nonextensive thermostatistics and establish
general Cramér-Rao type inequalities valid for estimation purposes.
These results are given in the mutidimensional case, and a feature
of our approach is that it works for arbitrary norms on $\mathbb{R}^{n}.$
As a direct consequence, we obtain multidimensional versions of our
$q$-Cramér-Rao inequalities, which include the Barankin-Vajda as
well as the standard Cramér-Rao inequality as particular cases. Furthermore,
in the case of a translation family, we have shown that the corresponding
Cramér-Rao type inequality is saturated by multidimensional $q$-Gaussian
distributions. We have also presented a related general Cramér-Rao
inequality which is saturated by the same $q$-Gaussian distributions.
These results imply in particular that the generalized $q$-Gaussians
are the minimizers of an extended version of the Fisher information,
among all distributions with a given moment, just as the standard
Gaussian minimizes Fisher information over all distributions with
a given variance. Since these generalized Gaussian are already known
to be the maximum entropy distributions for Rényi or Tsallis entropies,
this yields a new, complementary, information theoretic characterization
of these distributions.  Finally, we have derived
new multidimensional uncertainty relations from the extended Cramér-Rao
inequalities. These uncertainty relations involve generalized expectations
computed with respect to escort distributions, and we have shown that
some of these uncertainty relations are saturated by generalized $q$-Gaussian
distributions, thus generalizing the fact that the standard Heisenberg
uncertainty relation is saturated by a standard Gaussian. 

 Some of the presented uncertainty inequalities are not sharp, and
it would certainly be of interest to try to improve them, by looking
to the extremal functions or to optimal constants. 
It would also be interesting to try to document the properties of the two generalized Fisher information.
Finally, in their recent
work \cite{lutwak_extensions_2012}, Lutwak et al. have introduced
an abstract, implicit, notion of Fisher information matrix attached
to a probability density. It would be of interest to examine whether
this notion could be extended and interpreted in the estimation theory
framework.
\vspace{-0.2cm}

\section*{Acknowledgments} 

\vspace{-0.2cm}
The author thanks the anonymous referees for their suggestions that helped improve the article. The author is also very grateful to Deane Yang (Polytechnic Institute, NY), who suggested to look at general norms, and to Matthieu Fradelizi (LAMA, Univ. Paris-Est), for helpful discussions related to this work. Thanks are extended to Lodie Garbell for her friendly proofreading of the manuscript. 

\section*{References}


\end{document}